\newtheorem{theorem}{Theorem}
\newtheorem{lemma}[theorem]{Lemma}
\begin{document}
\title{Energy-Aware Cooperative Wireless Networks with Multiple Cognitive Users}

\author{Mahmoud Ashour,~\IEEEmembership{Member,~IEEE,}~
M.Majid~Butt,~\IEEEmembership{Senior~Member,~IEEE,} \\
Amr Mohamed,~\IEEEmembership{Senior~Member,~IEEE,}
Tamer ElBatt,~\IEEEmembership{Senior~Member,~IEEE,}
Marwan Krunz,~\IEEEmembership{Fellow,~IEEE}

%
\thanks{The material in this paper has been presented in part in ISIT 2014, Honolulu, Hi, USA \cite{Ashour:ISIT}.}
\thanks{Mahmoud Ashour is with the Department of Electrical Engineering and Computer Science, The Pennsylvania State University, PA, USA. E-mail: mma240@psu.edu.}
\thanks{M. Majid Butt is with the Center for Future Networks, Trinity College, University of Dublin, Ireland. 
	E-mail: majid.butt@ieee.org.}
\thanks{Amr Mohamed is with Computer Science and Engineering Dept., Qatar University, Doha, Qatar. E-mail: amrm@qu.edu.qa.}
\thanks{Tamer Elbatt is with the Wireless Intelligent Networks Center, Nile University, Giza, Egypt and EECE Dept., Faculty of Engineering, Cairo University, Giza, Egypt. Email: telbatt@ieee.org.}
\thanks{Marwan Krunz is with the Department of Electrical and Computer Engineering, University of Arizona, AZ, USA. Email: krunz@ece.arizona.edu.}

}

\maketitle

\begin{abstract}
In this paper, we study and analyze cooperative cognitive radio networks 
with arbitrary number of secondary users (SUs). Each SU is considered a prospective relay for the primary user (PU) besides having its own data transmission demand. We consider a multi-packet transmission framework which allows multiple SUs to transmit simultaneously thanks to dirty-paper coding. 
We propose power allocation and scheduling policies that optimize the throughput for both PU and SU with minimum energy expenditure.
The performance of the system is evaluated in terms of throughput and delay under different opportunistic relay selection policies. Towards this objective, we present a mathematical framework for deriving stability conditions for all queues in the system. Consequently, the throughput of both primary and secondary links is quantified. Furthermore, a moment generating function (MGF) approach is employed to derive a closed-form expression for the average delay encountered by the PU packets.
Results reveal that we achieve better performance in terms of throughput and delay at lower energy cost as compared to equal power allocation schemes proposed earlier in literature. Extensive simulations are conducted to validate our theoretical findings.
\end{abstract}

\begin{IEEEkeywords}
Cognitive relaying, opportunistic communication, throughput, delay, relay selection.
\end{IEEEkeywords}

\section{Introduction}
Cognitive radio networks have emerged as an efficient solution to the problem of spectrum scarcity and its under-utilization. In a cognitive radio network, the secondary users (SUs) exploit primary users' (PUs) period of inactivity to enhance their performance provided that PUs' performance remains unaffected. Depending on the mode of interaction of the primary and the secondary users, the cognitive radio networks are classified as interweave, underlay and overlay networks. In the last decade or so, the industry and academia has shown overwhelming interest in the application of cognitive radios in different networking solutions. Reference \cite{haykin2005cognitive} provides a comprehensive overview of the cognitive radio fundamentals and research activities.

On the other hand, cooperative diversity has been widely investigated in pursuit of combating multipath fading \cite{Tse,Kramer}. Incorporating cooperation into cognitive radio networks results in substantial performance gains in terms of throughput and delay for both primary and
secondary nodes \cite{Ephremedis}. The SUs help the PUs to transmit their data, and create opportunities for their own data transmission at the same time. The cooperation between the PUs and the SUs vary from just sharing information about queue states, channel state information (CSI), and primary packet transmission activity to the use of SUs as cognitive relays.
Typically, relaying is carried out over orthogonal channels due to the half-duplex communication constraint at the relays \cite{Tse}. However, some of the recent solutions overcome this limitation by accommodating simultaneous transmissions in a single slot \cite{Gamal,Krikidis,Krikidis2}. This is achieved through space-time coding \cite{Gamal} or dirty-paper coding (DPC) \cite{Krikidis,Krikidis2}. Conventionally, zero forcing and more recently prior zero forcing \cite{Song:2013} has been employed to mitigate the SU signal interference with the PU signals. On the other side, for cooperative cognitive radio networks with multiple SUs with their own data transmission demands, employing DPC allows one SU to transmit new data while the other SU helps the PU by relaying its data. Thus, the spectral efficiency of the system is enhanced.

In literature, there is a rich volume of recent work focusing on cooperation in cognitive relay networks. The benefits of cooperative relaying has been discussed and analyzed in \cite{Simeone:2007,Papadimitriou14,Han:TCOM2012}. In \cite{Simeone:2007}, authors derive the maximum sustained throughput of a single SU to maintain a fixed throughput for PU with and without relaying. They used a dominant system approach to guarantee the queue stability of both SU and PU while overcoming the queues interaction. A cognitive system comprising a single PU and multiple SUs along with multiple relays is considered in \cite{Han:TCOM2012}, where a proportion of the secondary relays help the PU in communication while a relay selection is performed from the remaining relays to give simultaneous access to the SU. The authors show that there exists an optimal number of cooperating relays with the PU that achieve optimal outage performance. In \cite{Jing:2014}, the authors also discuss a cognitive relay selection problem using optimal stopping theory. Reference \cite{Neely:2012} addresses a cognitive radio cooperation model where the SU can transmit its data along with primary transmission, but cooperates by deferring its transmission when the PU is transmitting. The authors in \cite{El-sherif:2011} address a cooperative cognitive relay network where both primary and secondary nodes use cognitive relays for data transmission. The relays help the PUs empty their queues fast and thereby, the throughput for the SUs increases as a result. SU throughput stability regions for cooperative cognitive networks have been derived for cooperative cognitive radio networks in different settings in \cite{Simeone:2007,Krikidis:2010}.

Krikidis \emph{et al.} address different protocols for a cognitive cooperative network and the stable throughput for both primary and the secondary networks is derived. In this paper, we adopt the model presented in \cite{Krikidis} and employ DPC. We consider a cognitive network with arbitrary number of SUs co-existing with a PU and sharing one common relay queue. We propose power allocation and scheduling policies that enhance the throughput of both primary and secondary links using the least possible energy expenditure. The summary of the main contributions of this work is as follows.
\begin{itemize}
  \item We propose an energy-efficient adaptive power (AP) allocation scheme for the SUs that enhances the throughput of both primary and secondary links. Energy-efficient transmission is achieved via exploiting instantaneous CSI to adapt the transmission powers at all SUs.
  \item We introduce two SU scheduling policies, which prioritize primary or secondary throughput enhancement according to the network requirements. We analyze the performance of both policies in conjunction with equal and adaptive power allocation schemes.     
  \item We develop a generic mathematical framework to derive closed-form expressions for both PU and SU throughput, and PU average delay. The mathematical analysis is performed for an arbitrary number of SUs coexisting with a PU. A detailed analysis is performed for each combination of power allocation and SU scheduling policies. 
We validate our theoretical findings via simulations. 
Results reveal that AP-based schemes yield superior performance compared to EP allocation proposed in \cite{Krikidis}, with significantly less energy cost. 
\end{itemize}

The rest of this paper is organized as follows. Section \ref{background} presents the information-theoretic background and preliminaries needed in
the sequel. Section \ref{system_model} introduces the system model and the proposed cooperation strategy. The opportunistic relay selection and power allocation strategies are presented in Section \ref{CS_PA} along with their mathematical analysis in Section \ref{sect:analysis}.
Numerical results are then presented in Section \ref{results}. Finally, concluding remarks are drawn in Section \ref{sect:conclusion}.

\section{Background and Preliminaries}\label{background}
\subsection{Dirty-paper coding}
\label{dpc}
DPC was first introduced in \cite{DPC} and we briefly state its implication. Consider a channel with output $\mathbf{y}=\mathbf{x}+\mathbf{q}+\mathbf{z}$, where $\mathbf{x}$, $\mathbf{q}$ and $\mathbf{z}$ denote the input, interference, and noise, respectively. The input $\mathbf{x} \in \mathbb{C}^{m}$ satisfies the power constraint $(1/m)\sum_{i=1}^{m}|x_{i}|^{2} \leq
P_{0}$. We assume that $\mathbf{q}$ and $\mathbf{z}$ are zero-mean Gaussian vectors with covariance matrices $Q\mathbf{I}_{m}$ and
$N_{0}\mathbf{I}_{m}$, respectively, where $\mathbf{I}_{m}$ denotes the $m \times m$ identity matrix. If the interference $\mathbf{q}$ is unknown to both transmitter and receiver, the channel capacity is given by $\log(1+P_{0}/(Q+N_{0}))$ (bits/channel use).
However, if $\mathbf{q}$ is known to the transmitter but not the receiver, the channel capacity is shown to be the same as that of a standard "interference free" Gaussian channel with signal-to-noise ratio $P_{0}/N_{0}$ using DPC.
In other words, if the interference is known a priori at the transmitter, DPC renders the link between the transmitter and its intended receiver interference-free.

\subsection{Channel outage}\label{Channel_Outage}
We present the basic definition of an outage event and the corresponding outage probability calculation. Consider a channel with output
$\mathbf{y}=\sqrt{\mathbf{h}}\mathbf{x}+\mathbf{z}$, where $\sqrt{\mathbf{h}}$ and $\mathbf{x}$ denote the fading coefficient and the input, respectively. Moreover, the noise $\mathbf{z}$ is modelled as zero-mean circularly symmetric complex Gaussian random variable with variance $N_{0}$. For a target transmission rate $R_{0}$, an outage occurs if the mutual information between the input and output is not sufficient to support that rate. The probability of such event, for a channel with average power constraint $P_{0}$, is
\begin{equation}
\mathbb{P} \left[ \mathbf{h} < \frac{2^{R_{0}}-1}{P_{0}/N_{0}} \right].
\label{outage}
\end{equation}

\begin{figure}[t]
\begin{center}
\includegraphics[width=3.3in]{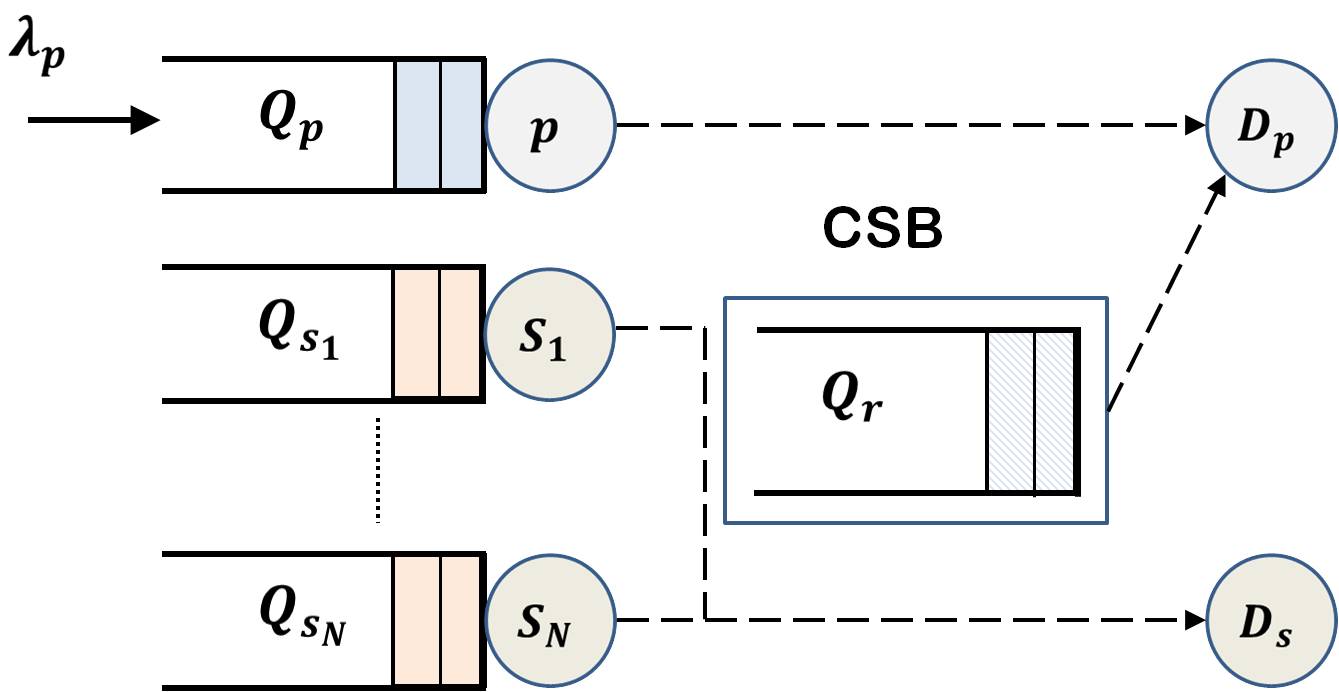}
\caption{Cognitive radio network model under consideration. The (logical) CSB is shown to coordinate the activities of the common relay queue.} \label{Fig1}
\end{center}
\end{figure}

\section{System Model}\label{system_model}
We consider the cognitive radio system shown in Fig. \ref{Fig1}.
The system comprises a PU $p$ that transmits its packets to a primary destination $D_{p}$. A cognitive network consisting of an arbitrary number of SUs coexists with the primary network. The number of SUs is denoted by $N$ and we refer to the set of SUs by $\mathbb{S}=\{ s_{i} \}_{i=1}^{N}$. Each SU has its own data that requires to be delivered to a common secondary destination $D_{s}$. All nodes are equipped with infinite capacity buffers. Time is slotted, and the transmission of a packet takes exactly one time slot. The duration of a time slot is normalized to unity and hence, the terms power and energy are used interchangeably in the sequel. We take into account the bursty nature of the source through modelling the arrivals at the PU as a Bernoulli process with rate $\lambda_{p}$ (packets/slot). In other words, at any given time slot, a packet
arrives at the PU with probability $\lambda_{p}<1$. The arrival process at the PU is independent and identically distributed (i.i.d.) across time slots. On the other hand, the SUs are assumed backlogged, i.e., SUs always have packets awaiting transmission. We assume that the SUs perfectly sense the PU's activity, i.e., there is no chance of collision between the PU and any of the secondary users. A node that successfully receives a packet broadcasts an acknowledgment (ACK) declaring the successful reception of that packet. ACKs sent by the destinations are assumed instantaneous and heard by all nodes error-free.

The channel between every transmitter-receiver pair exhibits frequency-flat Rayleigh block fading, i.e., the channel coefficient remains constant for one time slot and changes independently from one slot to another. The scalars $\mathbf{h}_{r_{i}}[n]$ and $\mathbf{h}_{s_{i}}[n]$ denote the absolute squared fading coefficient of the channels that connect the $i$th SU to $D_{p}$ and $D_{s}$, respectively, at the $n$th time slot. Similarly, the absolute squared fading coefficient of the channels that connect the PU to $D_{p}$ and $s_{i}$, at the $n$th time slot, are denoted by $\mathbf{h}_{p}[n]$ and $\mathbf{h}_{ps_{i}}[n]$, respectively. According to the Rayleigh fading assumption, $\mathbf{h}_{r_{i}}[n]$, $\mathbf{h}_{s_{i}}[n]$, and $\mathbf{h}_{ps_{i}}[n]$ are exponential random variables with means $\sigma^{2}$, for all $i=1,\hdots,N$. We denote an exponential random variable with mean $\sigma^{2}$ by $\mathrm{exp}(\sigma^{2})$. Then, we have $\mathbf{h}_{p}[n] \sim \mathrm{exp}(\sigma_{p}^{2})$. All links are considered statistically equivalent except for the link $p \rightarrow D_{p}$. We assume that $\sigma_{p}^{2}<\sigma^{2}$ to demonstrate the benefits of cooperation \cite{sadek}. For the ease of exposition, we set $\sigma^{2}=1$ throughout the paper. All communications are subject to additive white Gaussian noise of variance $N_{0}$.

Next, we present the queuing model of the system followed by the description of the employed cooperation strategy.

\subsection{Queuing Model}
The queues involved in the system analysis, shown in Fig. \ref{Fig1}, are described as follows:
\begin{itemize}
\item $Q_{p}$: a queue that stores the packets of the PU corresponding to the external Bernoulli arrival process with rate $\lambda_{p}$.

\item $Q_{s_{i}}$: a queue that stores the packets at the $i$th SU, where $i \in \{1,\hdots,N\}$.

\item $Q_{r}$: a queue that stores PU packets to be relayed to $D_{p}$.
\end{itemize}
Having independent relay queues for all SUs makes exact performance analysis intractable with the increasing number of users. To address this complexity, Krikidis \emph{et al.} introduced the idea of a common 'fictitious' relay queue $Q_{r}$ in \cite{Krikidis}, which is maintained by a so-called cluster supervision block (CSB) that controls and synchronizes all the activities of the cognitive cluster.
Along the lines of  \cite{Krikidis}, we assume the existence of a common relay such that SUs can perfectly exchange information with the CSB with a negligible overhead. The channels $\mathbb{S} \rightarrow D_{p},D_{s}$ are assumed known instantaneously at the CSB \cite{Krikidis,jovicic}.

The instantaneous evolution of queue lengths is captured as
\begin{align}\label{queue evolution}
\mathbf{Q}_{i}[n+1]= \left(\mathbf{Q}_{i}[n] - \mathbf{L}_{i}[n]\right)^{+} + \mathbf{A}_{i}[n], ~i \in \{p,r\} \cup \mathbb{S}
\end{align}
where $(x)^{+}=\text{max}(x,0)$ and $\mathbf{Q}_{i}[n]$ denotes the number of packets in the $i$th queue at the beginning of the $n$th time slot. The
binary random variables taking values either $0$ or $1$, $\mathbf{L}_{i}[n]$ and $\mathbf{A}_{i}[n]$, denote the departures and arrivals corresponding to the
$i$th queue in the $n$th time slot, respectively.

\subsection{Cooperation Strategy}\label{cooperation strategy}
The employed cooperative scheme is described as follows.
\begin{enumerate}
\item The PU transmits a packet whenever $Q_{p}$ is non-empty.

\item If the packet is successfully decoded by $D_{p}$, it broadcasts an ACK and the packet is dropped from $Q_{p}$.

\item If the packet is not successfully received by $D_{p}$ yet successfully decoded by at least one SU, an ACK is broadcasted and the packet is buffered in $Q_{r}$ and dropped from $Q_{p}$.

\item If $D_{p}$ and $\mathbb{S}$ fail to decode the packet, it is kept at $Q_{p}$ for retransmission in the next time slot.

\item When the PU is sensed idle, if $Q_{r}$ is non-empty, two out of all SUs transmit simultaneously. One SU is selected to relay a packet from $Q_{r}$ to $D_{p}$ and is denoted by $r^{*}$. Another SU is selected to transmit a packet of its own to $D_{s}$ and is denoted by $s^{*}$. Otherwise, if $Q_{r}$ is empty, one SU is selected to transmit a packet to $D_{s}$\footnote{Note that two SUs can be selected for transmission if $Q_r$ is empty. However, this requires multi-packet reception capability at the secondary destination which is out of the scope of this paper.}. The SUs' selection policies are explained in Section \ref{node_selection}.  

\item If the packets transmitted by the SUs are successfully received by their respective destinations, ACKs are broadcasted and these packets
    exit the system. Otherwise, the packet that experiences unsuccessful transmission is kept at its queue for later retransmission.

\end{enumerate}

\section{Power Allocation and Node Selection}\label{CS_PA}
In this section, we introduce the adaptive power allocation and opportunistic relay selection strategies for an arbitrary number of SUs, $N \geq 2$. We propose a power allocation policy
that minimizes energy consumption at each SU as compared to a fixed power allocation policy in \cite{Krikidis}.
In the sequel, node selection policy refers to the choice of the SU that relays a primary packet from $Q_{r}$ to $D_{p}$, and the SU that transmits a packet from its own queue to $D_{s}$, i.e., the selection of $r^*$ and $s^*$. The availability of CSI for all the channels (and thereby incurred interference) at the CSB is exploited to perform power allocation and node selection online, i.e., every time slot.

\subsection{Power Allocation}
Whenever $Q_{p}$ is non-empty, the PU transmits a packet with average power $P_{0}$. However, when the PU is idle and $Q_{r}$ is non-empty, two SUs out of $N$ transmit simultaneously by employing DPC \cite{DPC}. One SU relays a primary packet to $D_{p}$ while the other transmits a secondary packet to $D_{s}$. Since all SUs can perfectly exchange information with the CSB, $Q_{r}$ is accessible by both SUs selected for transmission. Therefore, the transmission of $r^*$ is considered a priori known interference at $s^*$. Accordingly, $s^*$ adapts its signal to see an interference-free link to $D_s$ using the result stated in Section \ref{dpc}. On the other hand, $s^*$ transmits a packet from its own queue which is not accessible by $r^*$. Thus, the transmission of $s^*$ causes an interference on the relay link, i.e., $r^* \rightarrow D_{p}$. The achievable rate region on this $Z$-interference channel at the $n$th time slot is given by
\begin{eqnarray}
\mathbf{R}_{s^*}[n]&=& \log \left[ 1 + \frac{P_{s^*}[n]\mathbf{h}_{s^*}[n]}{N_{0}} \right] \label{R1} \\
\mathbf{R}_{r^*}[n]&=& \log \left[ 1 + \frac{P_{r^*}[n]\mathbf{h}_{r^*}[n]}
{N_{0}+ P_{s^*}[n]\mathbf{h}_{\mathrm{I}}[n]} \right] \label{R2}
\end{eqnarray}
where $P_{s^*}[n]$ and $P_{r^*}[n]$ denote the instantaneous transmit powers of $s^*$ and $r^*$, respectively. The instantaneous absolute squared fading coefficients of the secondary, relay, and interference links are denoted by $\mathbf{h}_{s^*}[n]$, $\mathbf{h}_{r^*}[n]$, and $\mathbf{h}_{\mathrm{I}}[n]$, respectively. We denote the links $s^* \to D_{s}$, $r^* \to D_{p}$, and $s^* \to D_{p}$ by the secondary, the relay, and the interference link, respectively. 
Hereafter, we omit the temporal index $n$ for simplicity. Nevertheless, it is implicitly understood that power allocation and node selection are done on a slot-by-slot basis. In this work, we focus on developing an adaptive power allocation scheme for the transmitting SUs that use a fixed transmission rate $R_{0}$. Specifically, our multi-criterion objective is to enhance primary and secondary throughput while minimizing the energy consumption at each SU. The rates given by (\ref{R1}) and (\ref{R2}) stimulate thinking about how power is allocated to both transmitting SUs.

Next, we investigate two different power allocation policies for the SUs, namely, equal power (EP) allocation and adaptive power (AP) allocation. It is worth noting that power allocation and node selection are performed for the SUs since we have no control on the PU. 
Thus, in the following lines, we focus on the slots in which the PU is idle.

\subsubsection{Equal Power Allocation}\label{UP}
This policy assigns equal transmission powers to the SUs as proposed in \cite{Krikidis} and serves as a baseline scheme in this work. Whenever an SU transmits, it uses an average power $P_{\mathrm{max}}$. Specifically, if an SU is transmitting alone, e.g., $Q_{r}$ is empty, it uses a power $P_{\mathrm{max}}$. If two SUs transmit simultaneously, e.g., $Q_{r}$ is non-empty, $P_{s^*}=P_{r^*}=P_{\mathrm{max}}$.

\subsubsection{Adaptive Power Allocation}
\label{AP}
Unlike EP allocation, we exploit the CSI available at the CSB to propose an AP allocation scheme that minimizes the average power consumption
at each SU. We use (\ref{R1}) and (\ref{R2}) along with (\ref{outage}) to derive conditions on $P_{s^*}$ and $P_{r^*}$ for successful transmission at a target transmission rate $R_{0}$.
These conditions are
\begin{eqnarray}
P_{s^*} &\geq & \frac{(2^{R_{0}}-1)N_{0}}{\mathbf{h}_{s^*}} \label{P1} \\
P_{r^*} &\geq & \frac{(2^{R_{0}}-1)[N_{0}+ P_{s^*}\mathbf{h}_{\mathrm{I}}]}
{\mathbf{h}_{r^*}}. \label{P2}
\end{eqnarray}
A transmitter that violates the condition on its transmission power experiences a sure outage event. Furthermore, we impose a maximum power constraint at each SU, where $P_{s^*},P_{r^*} \leq P_{\mathrm{max}}$. It is worth noting that $P_{s^*}$ is computed first according to (\ref{P1}) followed by the computation of $P_{r^*}$ according to (\ref{P2}). In a given slot, if $P_{\mathrm{max}}$ is less
than the power required to guarantee a successful transmission for a given SU, i.e., $P_{\mathrm{max}}$ is less than the right hand sides of either (\ref{P1}) or (\ref{P2}), the CSB sets the power of that SU to zero to avoid a guaranteed outage event. Clearly, this results in increasing the throughput of the PU due to reduction in the amount of interference caused by the transmission of $s^*$ on the relay link in the time slots where $s^*$ refrains from transmitting. Moreover, compared to EP allocation, energy wasted in slots where a sure outage event occurs is now saved.

\subsection{Node Selection Policies}\label{node_selection}
We consider a system that assigns full priority to the PU to transmit whenever it has packets. Therefore, the SUs continuously monitor the PU's
activity seeking an idle time slot. When the PU is sensed idle, the SUs are allowed to transmit their own and/or a packet from the common queue $Q_r$. Note that it is possible to transmit only one packet by the SUs in the following scenarios:
\begin{enumerate}
  \item If $Q_{r}$ is empty, i.e., no primary packet to be relayed. Then, we select the SU with the best channel to $D_s$.
  \item $Q_{r}$ is non-empty, but $r^*$ or $s^*$ is set silent by the CSB to avoid a guaranteed outage event on the $r^*\to D_p$ or $s^*\to D_s$ link. Note that CSI for transmission is assumed to be known at CSB and outage event (due to power limitation) can be predicted before transmission as discussed in Section \ref{AP}. 
In this case, we choose the transmitting SU as the one with the best instantaneous link to the intended destination. For example, if $r^*$ is silent and $s^*$ is transmitting alone, the SU with the best link
    between $\mathbb{S} \to D_{s}$ transmits.
\end{enumerate}
The case for the simultaneous transmission of two SUs is the main topic for investigation in this paper. If the two transmissions occur simultaneously, the transmitting SUs are selected according to one of the following policies.

\subsubsection{Best secondary link (BSL)}\label{BSL}
In this policy, the utility function to be maximized is the SU throughput. Therefore,
we choose the SU that transmits a packet of its own as the one with the best instantaneous link to $D_{s}$, i.e.,
\begin{equation}
\mathbf{h}_{s^*}=\underset{i \in \{1,\hdots,N\}}{\mathrm{max.}}~\mathbf{h}_{s_{i}}.
\end{equation}
Among the remaining $(N-1)$ SUs, the one with the best instantaneous link to $D_{p}$ is chosen to be $r^*$.

\subsubsection{Best primary link (BPL)}\label{BPL}
In this policy, unlike BSL, the utility function to be maximized is PU throughput. Thus,
we choose the SU that relays a primary packet from $Q_{r}$ as the one with the best instantaneous link to $D_{p}$, i.e.,
\begin{equation}
\mathbf{h}_{r^*}=\underset{i \in \{1,\hdots,N\}}{\mathrm{max.}}~\mathbf{h}_{r_{i}}.
\end{equation}
Among the remaining $(N-1)$ SUs, the one with the best instantaneous link to $D_{s}$ is chosen to be $s^*$.

It is worth noting that all links $\mathbb{S} \rightarrow D_{p},D_{s}$ are statistically independent. Thus, at any given time slot, if a certain
SU has the best instantaneous channel to a certain destination, e.g., $D_{p}$, we can not infer any information about its link quality to the
other destination, e.g., $D_{s}$. Hence, $\forall i \in \{1,\hdots,N\}$, $s_{i}$ can have the best link to $D_{p}/D_{s}$ irrespective of the quality of
its link to the other destination.

So far, we have introduced two policies for each of the power allocation and SU scheduling policies. Thus, we have four different cases
arising from the possible combinations of these policies. Next, we proceed with the performance analysis of the system for each case.

\section{Throughput and Delay Analysis}
\label{sect:analysis}
In this section, we conduct a detailed analysis for the system performance in terms of throughput and delay. Towards this objective, we derive
the stability conditions on the queues with stochastic packet arrivals, namely, $Q_{p}$ and $Q_{r}$. The stability of a queue is loosely defined as
having a bounded queue size, i.e., the number of packets in the queue does not grow to infinity \cite{sadek}. Furthermore, we analyze the average queuing delay of the primary packets. We obtain a closed-form expression for this delay through deriving the moment generating function (MGF) of the joint lengths of $Q_{p}$ and $Q_{r}$. It is worth noting that the SUs' queues are assumed backlogged and hence, no queueing delay analysis is performed for the secondary packets. In the following lines, we provide a general result for the throughput of the primary and secondary links as well as the delay of primary packets.
Then, we proceed to highlight the role of the proposed power allocation and node selection policies.
We first introduce some notation. The probabilities of successful transmissions on the relay and secondary links are denoted by $f_{r^*}$ and $f_{s^*}$, respectively. A transmission on the link $p \to D_{p}$ is successful with probability $f_{p}$. In addition, the probability that at least one SU successfully decodes a transmitted primary packet is denoted by $f_{ps}$. 

\begin{theorem}\label{Thm1}
The maximum achievable PU throughput for the system shown in Fig. \ref{Fig1}, under any combination of  power allocation and node selection
policies, is given by
\begin{equation}\label{PU_thrpt}
\lambda_{p}<\frac{f_{r^*}[f_{p}+(1-f_{p})f_{ps}]}{f_{r^*}+(1-f_{p})f_{ps}}
\end{equation}
while the throughput of the SU $s_{i} \in \mathbb{S}$ is given by
\begin{equation}\label{SUthrpt}
\mu_{s_i}=\frac{1}{N} \left[1-\frac{\lambda_p}{f_{p}+(1-f_{p})f_{ps}}\right]f_{s^*}.
\end{equation}
\end{theorem}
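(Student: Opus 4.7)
The plan is to treat both statements as consequences of balancing arrival and service rates at the two stochastic queues $Q_p$ and $Q_r$, exploiting the fact that the policy choice enters only through the four probabilities $f_p$, $f_{ps}$, $f_{r^*}$, $f_{s^*}$. So a single unified derivation covers all four policy combinations.

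First I would pin down the service process of $Q_p$. Whenever $Q_p$ is non-empty the PU transmits, and the head-of-line packet leaves $Q_p$ either when $D_p$ decodes directly (probability $f_p$) or, disjointly, when $D_p$ fails but at least one SU decodes (probability $(1-f_p)f_{ps}$). Hence $Q_p$ is a discrete-time Bernoulli/Bernoulli queue with arrival rate $\lambda_p$ and service rate $\mu_p = f_p + (1-f_p)f_{ps}$, so by Loynes' theorem it is stable iff $\lambda_p < \mu_p$, and in steady state $\Pr[Q_p \neq \emptyset] = \lambda_p/\mu_p$ (equivalently, the PU is idle with probability $1-\lambda_p/\mu_p$).

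Next I would write down the rates at the relay queue $Q_r$. A packet enters $Q_r$ precisely in slots where the PU is active, $D_p$ fails, and some SU decodes; this occurs at rate $\lambda_r^{\text{in}} = (\lambda_p/\mu_p)(1-f_p)f_{ps}$. A departure from $Q_r$ occurs in slots where the PU is idle and the chosen relay $r^*$ is successful, giving $\mu_r = (1 - \lambda_p/\mu_p)f_{r^*}$. Imposing $\lambda_r^{\text{in}} < \mu_r$ and solving for $\lambda_p$ yields
\begin{equation*}
\lambda_p < \frac{f_{r^*}\bigl[f_p + (1-f_p)f_{ps}\bigr]}{f_{r^*} + (1-f_p)f_{ps}},
\end{equation*}
which is (\ref{PU_thrpt}); one checks this is tighter than $\lambda_p < \mu_p$, so it is the binding constraint. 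For the SU throughput, note that whenever the PU is idle, exactly one SU is selected as $s^*$ to transmit a fresh secondary packet, regardless of whether $Q_r$ is empty or not. By the symmetry of the channel statistics across SUs, $s_i$ is selected with probability $1/N$ under either BSL (equal chance of best $\mathbb{S}\to D_s$ link) or BPL ($s_i$ is picked as $r^*$ with probability $1/N$ or, with probability $(N-1)/N$, is picked as $s^*$ uniformly among the other $N-1$, giving $1/N$ in total). Combining the idle-slot probability, the selection probability, and the success probability $f_{s^*}$ gives (\ref{SUthrpt}).

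The main obstacle is that the service of $Q_r$ depends on $Q_p$ being empty, so the two queues are coupled; strictly speaking the calculations above treat $\{Q_p = \emptyset\}$ as independent of the per-slot channel realizations. I would handle this the same way as the prior literature cited in the introduction by constructing a dominant system in which $Q_r$ is fed a dummy packet in every slot it would otherwise be empty. The dominant system decouples the queues, its stability implies stability of the original, and by an indistinguishability argument the steady-state idle probabilities agree on the stability boundary, so the bounds obtained are tight. Everything else reduces to bookkeeping the disjoint events that constitute arrivals and departures at each queue.
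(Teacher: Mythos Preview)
Your proposal is correct and follows essentially the same route as the paper: apply Loynes' theorem to $Q_p$ to get $\mu_p=f_p+(1-f_p)f_{ps}$ and $\Pr[Q_p\neq\emptyset]=\lambda_p/\mu_p$ via Little's law, then balance arrival and service rates at $Q_r$ to obtain (\ref{PU_thrpt}), and finally invoke symmetry across the $N$ SUs for (\ref{SUthrpt}). The only difference is that you add a dominant-system justification for the $Q_p$--$Q_r$ coupling, which the paper simply takes for granted; this is extra rigor rather than a different argument.
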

\begin{proof}
We use Loynes' theorem \cite{Loynes} to establish the stability conditions for $Q_{p}$ and $Q_{r}$. The theorem states that if the arrival and
service processes of a queue are stationary, then the queue is stable if and only if the arrival rate is strictly less than the service rate.
Therefore, for $Q_{p}$ to be stable, the following condition must be satisfied
\begin{equation}\label{Qp_stability}
\lambda_{p}<\mu_{p}
\end{equation}
where $\mu_{p}$ denotes the service rate of $Q_{p}$. A packet departs $Q_{p}$ if it is successfully decoded by at least one node in $\mathbb{S} \cup \{D_{p}\}$. Thus, $\mu_{p}$ is given by
\begin{align}\label{mu_p}
\mu_{p}=f_{p} + (1-f_{p})f_{ps}.
\end{align}
Similarly, $Q_{r}$ is stable if
\begin{align}\label{Qr}
\frac{\lambda_{p}}{\mu_{p}}(1-f_{p})f_{ps}<\left[ 1-\frac{\lambda_{p}}{\mu_{p}}\right]f_{r^*}.
\end{align}
A PU's packet arrives at $Q_{r}$ if $Q_{p}$ is non-empty and an outage occurs on the direct link $p \to D_p$ yet no outage occurs at least on one link between $p \rightarrow \mathbb{S}$. From Little's theorem \cite{Bertsekas}, we know that probability of $Q_{p}$ being non-empty equals
$\lambda_{p}/\mu_{p}$. This explains the rate of packet arrivals at $Q_{r}$ shown on the left hand side (LHS) of (\ref{Qr}). The right hand side (RHS) represents the service rate of $Q_{r}$. A packet departs $Q_{r}$ if $Q_{p}$ is empty and there is no outage on the link $r^* \rightarrow D_{p}$.
Rearranging the terms of (\ref{Qr}), we obtain the maximum achievable PU throughput as given by (\ref{PU_thrpt}) provided that $\mu_{p}$ is given by (\ref{mu_p}). It is worth noting that (\ref{PU_thrpt}) provides a tighter bound on $\lambda_{p}$ than (\ref{Qp_stability}) due to the
multiplication of $\mu_{p}$ in (\ref{PU_thrpt}) by a term less than one.

On the other hand, we compute the throughput of SUs by calculating the service rate of their queues since they are assumed backlogged. Due to the symmetric configuration considered, i.e., statistically equivalent links $\mathbb{S} \rightarrow D_{s}$, the throughput of all SUs is the same. For $s_i \in \mathbb{S}$, a packet departs $Q_{s_i}$ if $Q_{p}$ is empty, $s_{i}$ is selected to transmit a packet of its own and no outage occurs on the link $s_{i} \rightarrow D_{s}$. Due to symmetry, at any time slot, all SUs have equal probabilities to be selected to transmit a packet from their own queues, i.e., $\mathbb{P}[s^*=s_i]=1/N ~ \forall i \in \{1,\hdots,N\}$. Therefore, the SUs' throughput is given by (\ref{SUthrpt}) provided that $\mu_{p}$ is given by (\ref{mu_p}).
\end{proof}

Next, we develop a mathematical framework to analyze the average queuing delay for the PU's packets.

\begin{theorem}\label{Thm2}
The average queuing delay encountered by the PU packets in the system shown in Fig. \ref{Fig1}, under any combination of power
allocation and node selection policies, is
\begin{equation}\label{tau}
\tau= \frac{N_{p} + N_{r}}{\lambda_{p}}
\end{equation}
where $N_{p}$ and $N_{r}$, the average lengths of $Q_{p}$ and $Q_{r}$, respectively, are given by
\begin{eqnarray}
N_{p}&=&\frac{-\lambda_{p}^{2} + \lambda_{p}}{\mu_{p}-\lambda_{p}}\label{pk} \\
N_{r}&=& \frac{r \lambda_{p}^{2} + s \lambda_{p}}{\delta \lambda_{p}^{2} + \zeta \lambda_{p} + \eta}\label{Nr}
\end{eqnarray}
and
\begin{eqnarray}
r&=&f_{ps}(1-f_{p}) \left[ \frac{f_{r^*}-f_{p}}{\mu_{p}}- f_{r^*}-f_{ps}(1-f_{p}) \right]\\
s &=& f_{ps}(1-f_{p})\mu_{p}\\
\delta &=& f_{r^*} + f_{ps}(1-f_{p})\\
\zeta &=& \mu_{p} \left[ -2f_{r^*}-f_{ps}(1-f_{p}) \right]\\
\eta &=& \mu_{p}^{2} f_{r^*}
\end{eqnarray}
while $\mu_{p}$ is given by (\ref{mu_p}).
\end{theorem}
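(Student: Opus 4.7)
\emph{Proof proposal.} The plan is to apply Little's theorem to the aggregate subsystem of PU packets. Each PU packet resides in $Q_{p}$ and, conditionally on being relayed, subsequently in $Q_{r}$, before being delivered to $D_{p}$; hence the number of PU packets in the system at time $n$ is $\mathbf{Q}_{p}[n] + \mathbf{Q}_{r}[n]$. Because exogenous PU packets enter the network at rate $\lambda_{p}$ and no packets are generated internally, Little's theorem gives $\tau = (N_{p} + N_{r})/\lambda_{p}$, which is (\ref{tau}). It then remains to evaluate $N_{p}$ and $N_{r}$ in closed form.

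For $N_{p}$, observe that the PU transmits whenever $Q_{p}$ is non-empty regardless of the state of $Q_{r}$, and that a PU packet leaves $Q_{p}$ in a slot with probability $\mu_{p}$ conditional on $\mathbf{Q}_{p}\geq 1$. Therefore $Q_{p}$ evolves autonomously as a discrete-time Geo/Geo/1 queue with Bernoulli arrivals of rate $\lambda_{p}$ and Bernoulli service of rate $\mu_{p}$, following the recursion (\ref{queue evolution}). Its stationary mean, $\lambda_{p}(1-\lambda_{p})/(\mu_{p}-\lambda_{p})$, is classical and matches (\ref{pk}).

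For $N_{r}$ the analysis is coupled, since arrivals to $Q_{r}$ occur only when $\mathbf{Q}_{p}\geq 1$ (PU is transmitting, $p\to D_{p}$ fails, at least one SU decodes) whereas departures occur only when $\mathbf{Q}_{p}=0$ (an SU relays). I would introduce the joint MGF $G(x,y)\triangleq \mathbb{E}\bigl[x^{\mathbf{Q}_{p}[n]}\,y^{\mathbf{Q}_{r}[n]}\bigr]$ in steady state, substitute the one-step recursions (\ref{queue evolution}) for both queues with the success/failure indicators governing $\mathbf{L}_{p}, \mathbf{A}_{r}, \mathbf{L}_{r}$, and split the expectation over the four boundary regions $\{\mathbf{Q}_{p}\in\{0,\geq 1\}\}\times\{\mathbf{Q}_{r}\in\{0,\geq 1\}\}$ needed to resolve the $(\cdot)^{+}$ clips. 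This produces a functional equation of the schematic form $H(x,y)\,G(x,y)=\Psi_{1}(x,y)G(0,y)+\Psi_{2}(x,y)G(x,0)+\Psi_{3}(x,y)G(0,0)$ with polynomial coefficients built from $\lambda_{p}, f_{p}, f_{ps}, f_{r^{*}}$ and $\mu_{p}$. Differentiating once in $y$ and evaluating at $(x,y)=(1,1)$ yields a linear equation for $N_{r}$ in terms of a small set of boundary marginals, which are pinned down by the $Q_{r}$-balance (arrival rate equals departure rate) together with the already-known $\mathbb{P}[\mathbf{Q}_{p}=0]=1-\lambda_{p}/\mu_{p}$ obtained in the $N_{p}$ computation. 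Algebraic simplification should then collapse the result into the rational form (\ref{Nr}) with the coefficients $r, s, \delta, \zeta, \eta$.

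The main obstacle is the careful bookkeeping of the boundary terms $G(0,y), G(x,0), G(0,0)$ when the $(\cdot)^{+}$ operators are expanded, because the service indicators for $Q_{p}$ and $Q_{r}$ are each nonzero only on specific subregions of $\{0,\geq 1\}^{2}$ and the arrival of a packet to $Q_{r}$ is coupled to the departure indicator of $Q_{p}$. I would handle this by expanding $x^{(\mathbf{Q}_{p}-\mathbf{L}_{p})^{+}}y^{(\mathbf{Q}_{r}-\mathbf{L}_{r})^{+}}$ explicitly as a bulk piece plus boundary corrections of the form $(1-x^{-\mathbf{L}_{p}})\mathbf{1}\{\mathbf{Q}_{p}=0\}$ before taking expectations, so that the boundary slices appear as the functions $G(0,y), G(x,0), G(0,0)$ and can be solved for systematically; once these are eliminated, the stated closed form for $N_{r}$ follows by straightforward algebra.
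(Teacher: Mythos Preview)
Your proposal is correct and follows essentially the same route as the paper: Little's law for $\tau$, the Geo/Geo/1 (Pollaczek--Khinchine) formula for $N_{p}$, and the two-variable MGF $G(x,y)$ with differentiation at $(1,1)$ for $N_{r}$. Two minor remarks: the paper obtains (\ref{tau}) by first conditioning on whether a packet is relayed and then applying Little's law to $Q_{p}$ and $Q_{r}$ separately, whereas your aggregate application of Little's law to the tandem is more direct; and in the functional equation the $G(x,0)$ term will not actually appear, since departures from $Q_{r}$ occur only when $\mathbf{Q}_{p}=0$, so the only boundary unknowns that survive are $G(0,y)$ and $G(0,0)$, exactly as in the paper's equation (\ref{Gxy}).
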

\begin{proof}
If a primary packet is directly delivered to $D_{p}$, it experiences the queuing delay at $Q_{p}$ only.
This happens with a probability $1-\epsilon=f_{p}/\ \mu_{p}$. However, if the packet is forwarded to $D_{p}$ through the relay link, it experiences the total queuing delay at both $Q_{p}$ and $Q_{r}$. Thus, the average delay is
\begin{equation}\label{delay_averaging}
\tau= (1-\epsilon)\tau_{p} + \epsilon (\tau_{p}+ \tau_{r})= \tau_{p} + \epsilon \tau_{r}
\end{equation}
where $\tau_{p}$ and $\tau_{r}$ denote the average delays at $Q_{p}$ and $Q_{r}$, respectively.
The arrival rates at $Q_{p}$ and $Q_{r}$ are given by $\lambda_{p}$ and $\epsilon \lambda_{p}$, respectively. Thus, applying Little's law
\cite{Bertsekas} renders
\begin{equation}\label{delay_formula}
\tau_{p}=N_{p}/ \lambda_{p}, \hspace{10mm} \tau_{r}=N_{r}/ \epsilon \lambda_{p}.
\end{equation}
Substituting (\ref{delay_formula}) in (\ref{delay_averaging}) renders $\tau$ exactly matching (\ref{tau}).

Proceeding with computing $N_{p}$, we make use of the fact that $Q_{p}$ is a discrete-time $M/M/1$ queue with arrival rate $\lambda_{p}$ and
service rate $\mu_{p}$. Thus, $N_{p}$ is directly given by (\ref{pk}) through applying the Pollaczek-Khinchine formula \cite{PK}. However, the
dependence of the arrival and service processes of $Q_{r}$ on the state of $Q_{p}$ necessitates using a MGF approach \cite{sidi} to calculate
$N_{r}$. The MGF of the joint lengths of $Q_{p}$ and $Q_{r}$ is defined as
\begin{equation}
G(x,y)= \lim_{n \rightarrow \infty} \mathbb{E} \left[ x^{\mathbf{Q}_{p}[n]} y^{\mathbf{Q}_{r}[n]} \right]
\end{equation}
where $\mathbb{E}$ denotes the statistical expectation operator. Following the framework in \cite{Ephremedis,ashour2015cognitive}, we get
\begin{equation}\label{Gxy}
G(x,y)=(\lambda_{p}x+1-\lambda_{p})\frac{B(x,y)G(0,0)+C(x,y)G(0,y)}{yD(x,y)}
\end{equation}
where
\begin{align}
B(x,y) &=x(y-1)f_{r^*} \nonumber \\
C(x,y) &=xf_{r^*}- yf_{p}- y^{2}f_{ps}(1- f_{p})+ xy(\mu_{p}- f_{r^*}) \nonumber \\
D(x,y) &=x \! - \!(\lambda_{p}x \!+\!  1  \!-\!  \lambda_{p})[f_{p}\! +\!  yf_{ps}(1 -  f_{p})+  x(1 -  \mu_p)].
\end{align}
First, we compute the derivative of (\ref{Gxy}) with respect to $y$ and then, take the limit of the result
when $x$ and $y$ tend to $1$. This verifies that $N_{r}$ is given by (\ref{Nr}).
\end{proof}

Theorems \ref{Thm1} and \ref{Thm2} provide closed-form expressions for the network performance metrics, throughput and delay. These expressions are mainly functions of the outage probabilities on various links in the network, namely, $f_{p}$, $f_{ps}$, $f_{r^*}$, and $f_{s^*}$. In the following lines, we quantify these outage probabilities for the different combinations of power allocation and node selection policies. It is worth noting that $f_{p}$ and $f_{ps}$ are related to the PU side. Therefore, they remain the same for all combinations of power allocation and node selection policies which are performed at the SUs side. Using (\ref{outage}), we have
\begin{align}\label{fpDp}
f_{p}=\mathbb{P}\left[ \mathbf{h}_{p} > \frac{2^{R_{0}}-1}{P_{0}/N_{0}} \right]=e^{-\alpha/\sigma_{p}^{2}}
\end{align}
where $\alpha=\frac{2^{R_{0}}-1}{P_{0}/N_{0}}$. This follows from the Rayleigh fading assumption that renders $\mathbf{h}_{p} \sim \mathrm{exp}(\sigma_{p}^{2})$. Similarly,
\begin{align}\label{fpstar}
f_{ps}=\mathbb{P}\left[ \underset{i \in \{1,\hdots,N\}}{\mathrm{max}} \mathbf{h}_{ps_i} > \alpha \right]=1-(1-e^{-\alpha})^N.
\end{align}

On the other hand, we shift our attention to the SU side to calculate $f_{r^*}$ and $f_{s^*}$. We analyze the four cases arising from the proposed power allocation and relay selection policies in the following order: (i) EP-BSL, (ii) EP-BPL, (iii) AP-BSL, and (iv) AP-BPL. 
Towards this objective, we first note that for each SU, its link qualities to $D_{p}$ and $D_{s}$ are statistically independent. Furthermore, these links are independent of the other $(N-1)$ users' links. Thus, we are dealing with $2N$ i.i.d. random variables, $\mathbf{h}_{r_{i}}$ and $\mathbf{h}_{s_{i}}$, $\forall i \in \{1,\hdots,N\}$. Each of these variables is exponentially distributed with mean $1$ as a direct consequence of the Rayleigh fading model considered. We begin with an analysis of the distributions of the random variables involved in the derivations of $f_{r^*}$ and $f_{s^*}$, specifically, $\mathbf{h}_{r^*}$, $\mathbf{h}_{\mathrm{I}}$, and $\mathbf{h}_{s^*}$. Finding these distributions is fundamental to the mathematical derivations presented next. Obviously, the distributions is dependent on the node selection policy employed and hence, we present a separate analysis for BSL and BPL in Appendices \ref{BSL_dist} and \ref{BPL_dist}, respectively.

For the ease of exposition, we define $a=\frac{2^{R_{0}}-1}{P_{\mathrm{max}}/N_{0}}$, $b=(2^{R_{0}}-1)^{-1}$, and $\beta=1-e^{-a}$. The exponential integral function, $E_{1}[.]$, is defined as $E_{1} [x]=\int_{x}^{\infty}(e^{-t}/t)dt$.
\begin{lemma}\label{lemma2}
For EP-BSL, $f_{r^*}$ and $f_{s^*}$ are given by
\begin{eqnarray}
f_{r^*}&=& 1- \displaystyle \sum_{k=0}^{N-1} {N-1 \choose k} (-1)^{k}
\frac{e^{-ka}}{(1+k/b)} \label{fkpDpBSL} \\
f_{s^*}&=&1-\beta^N. \label{fksDsBSL}
\end{eqnarray}
\end{lemma}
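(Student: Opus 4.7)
My plan is to handle $f_{s^*}$ and $f_{r^*}$ separately, invoking the BSL-specific joint distribution of $(\mathbf{h}_{s^*},\mathbf{h}_{r^*},\mathbf{h}_{\mathrm{I}})$ derived in Appendix \ref{BSL_dist}. Throughout, I will use that the $2N$ channel gains $\{\mathbf{h}_{r_i},\mathbf{h}_{s_i}\}_{i=1}^N$ are i.i.d.\ $\mathrm{exp}(1)$ and that under EP both active SUs transmit at power $P_{\max}$, so with $a=(2^{R_0}-1)N_0/P_{\max}$ and $1/b=2^{R_0}-1$ the outage criterion from Section \ref{Channel_Outage} applied to (\ref{R1}) gives the success event $\{\mathbf{h}_{s^*}\geq a\}$ on the secondary link, and applied to (\ref{R2}) gives the success event $\{\mathbf{h}_{r^*}\geq a+\mathbf{h}_{\mathrm{I}}/b\}$ on the relay link.

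For $f_{s^*}$, BSL picks $s^*=\arg\max_i \mathbf{h}_{s_i}$, so $\mathbf{h}_{s^*}$ is the maximum of $N$ i.i.d.\ $\mathrm{exp}(1)$ variables with CDF $(1-e^{-x})^N$. Hence $f_{s^*}=1-(1-e^{-a})^N=1-\beta^N$, which is (\ref{fksDsBSL}).

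For $f_{r^*}$, the key structural observation I rely on is that the BSL selection of $s^*$ depends only on the $s$-channels, and these are independent of all $r$-channels. Consequently, conditioning on the identity of $s^*$, the $r$-channels of the remaining $N-1$ SUs are i.i.d.\ $\mathrm{exp}(1)$ and independent of $\mathbf{h}_{r_{s^*}}$; by symmetry this conditioning is vacuous. Therefore $\mathbf{h}_{r^*}$ is distributed as the maximum of $N-1$ i.i.d.\ $\mathrm{exp}(1)$ variables, while $\mathbf{h}_{\mathrm{I}}=\mathbf{h}_{r_{s^*}}\sim\mathrm{exp}(1)$ is independent of $\mathbf{h}_{r^*}$.

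Given this independence, I will compute $f_{r^*}$ by conditioning on $\mathbf{h}_{\mathrm{I}}=y$, using $\mathbb{P}[\mathbf{h}_{r^*}<a+y/b]=(1-e^{-(a+y/b)})^{N-1}$, expanding via the binomial theorem, and integrating term by term against $e^{-y}$. Each term reduces to the elementary integral $\int_0^\infty e^{-k(a+y/b)}e^{-y}\,dy=e^{-ka}/(1+k/b)$, yielding (\ref{fkpDpBSL}). The only genuine obstacle is justifying the independence of $\mathbf{h}_{r^*}$ and $\mathbf{h}_{\mathrm{I}}$ and the order-statistic distribution of $\mathbf{h}_{r^*}$ under BSL; this is precisely what Appendix \ref{BSL_dist} supplies, after which the remaining calculation is a straightforward binomial expansion and exponential integration.
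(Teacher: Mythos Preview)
Your proposal is correct and follows essentially the same route as the paper's proof in Appendix~\ref{EP-BSL}: you identify the success events $\{\mathbf{h}_{s^*}\geq a\}$ and $\{\mathbf{h}_{r^*}\geq a+\mathbf{h}_{\mathrm{I}}/b\}$, invoke the BSL distributions of Appendix~\ref{BSL_dist} (namely $\mathbf{h}_{s^*}$ the maximum of $N$ exponentials, $\mathbf{h}_{r^*}$ the maximum of $N-1$ exponentials, $\mathbf{h}_{\mathrm{I}}\sim\mathrm{exp}(1)$ independent of $\mathbf{h}_{r^*}$), condition on $\mathbf{h}_{\mathrm{I}}$, expand $(1-e^{-(a+y/b)})^{N-1}$ via the binomial theorem, and integrate term by term to obtain $e^{-ka}/(1+k/b)$. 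The only cosmetic difference is that you justify the independence of $\mathbf{h}_{r^*}$ and $\mathbf{h}_{\mathrm{I}}$ directly via the observation that $s^*$ is determined by the $s$-channels alone, whereas the paper reaches the same distributional conclusion in Appendix~\ref{BSL_dist} through an order-statistics averaging argument; both are valid and lead to the same computation.
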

\begin{proof}
See Appendix \ref{EP-BSL}.
\end{proof}

\begin{lemma}\label{lemma3}
For EP-BPL, $f_{r^*}$ is given by
\begin{equation}
f_{r^*}=\frac{N}{N-1} \displaystyle \sum_{k=1}^{N-1} {N-1 \choose k-1} 
\left[ \mathrm{I}_{1} - \mathrm{I}_{2} \right]. \label{fskpDp_FPBPL}
\end{equation} 
where
\begin{align}
&\mathrm{I}_{1}=\displaystyle \sum_{m=0}^{k-1} {k-1 \choose m} \frac{(-1)^{m}}{(N-k+m+1)} \\
&\mathrm{I}_{2}=\displaystyle \sum_{m=0}^{k-1} \sum_{\ell=0}^{N} {k-1 \choose m} {N \choose \ell}
\frac{(-1)^{m+\ell}e^{-a \ell}}{(N-k+m+\ell / b +1)} 
\end{align}
On the other hand, $f_{s^*}$ is given by
\begin{equation}
f_{s^*}=\gamma \left(1-\beta^{N-1}\right) + (1-\gamma)\left(1-\beta^N\right) \label{fksDs_UPBPL}
\end{equation}
where
\begin{equation}\label{gamma}
\gamma=\frac{\lambda_{p}(1-f_{p})f_{ps}}{(\mu_{p}-\lambda_{p})f_{r^*}}.
\end{equation}
\end{lemma}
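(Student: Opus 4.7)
The plan is to compute $f_{r^*}$ and $f_{s^*}$ separately, each by exploiting the independence of the primary and secondary channel families at the $N$ SUs. For $f_{r^*}$, I first extract the outage condition for the relay link under EP by combining (\ref{R2}) and (\ref{outage}) with $P_{s^*}=P_{r^*}=P_{\max}$, which reduces to $\mathbf{h}_{r^*}\ge a+\mathbf{h}_{\mathrm{I}}/b$. Under BPL, $\mathbf{h}_{r^*}$ is the maximum of $N$ iid $\mathrm{exp}(1)$ relay channels, while $\mathbf{h}_{\mathrm{I}}$ is the relay channel of the SU chosen as $s^*$ via the secondary channels. By primary/secondary channel independence and the exchangeability of the SUs, the identity of $s^*$ is uniform over the $N-1$ non-$r^*$ indices, so $\mathbf{h}_{\mathrm{I}}$ is distributed as $X_{(K)}$ with $K$ uniform on $\{1,\ldots,N-1\}$ and independent of the order statistics themselves, giving
\begin{equation*}
f_{r^*} = \frac{1}{N-1}\sum_{k=1}^{N-1}\mathbb{P}\!\left[X_{(N)}\ge a+X_{(k)}/b\right],
\end{equation*}
where $X_{(1)}<\cdots<X_{(N)}$ are the order statistics of $N$ iid $\mathrm{exp}(1)$ variables.

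I would evaluate each summand from the joint density of $(X_{(k)},X_{(N)})$. After the substitution $x=1-e^{-u}$, $y=1-e^{-v}$, this joint density becomes a polynomial on $\{0\le x\le y\le 1\}$ and the outage event transforms to $y\ge 1-e^{-a}(1-x)^{1/b}$. Performing the inner integral in $y$ and then substituting $t=1-x$ reduces each summand to $N\binom{N-1}{k-1}\int_{0}^{1}(1-t)^{k-1}[t^{N-k}-(t-e^{-a}t^{1/b})^{N-k}]\,dt$. A binomial expansion of $(1-t)^{k-1}$ in the first piece produces $I_{1}$; a double binomial expansion of $(1-t)^{k-1}$ and of $(t-e^{-a}t^{1/b})^{N-k}$ in the second piece, followed by term-by-term integration, produces $I_{2}$. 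Collecting the prefactor $N/(N-1)$ over the sum in $k$ yields (\ref{fskpDp_FPBPL}).

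For $f_{s^*}$, the DPC argument of Section~\ref{dpc} renders the $s^*\!\to\!D_s$ link interference-free, so a successful secondary transmission only requires $\mathbf{h}_{s^*}\ge a$. Conditioning on $Q_{r}$ at the transmission instant gives two cases: when $Q_{r}$ is empty, $s^*$ is the maximum over all $N$ secondary channels, yielding success probability $1-\beta^{N}$; when $Q_{r}$ is non-empty, $s^*$ is the maximum over the $N-1$ non-$r^*$ secondary channels (marginally independent of $r^*$ by the same independence argument), yielding $1-\beta^{N-1}$. Writing $\gamma$ for the conditional probability that $Q_{r}$ is non-empty given the PU is idle produces the convex combination (\ref{fksDs_UPBPL}). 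To derive (\ref{gamma}), I would apply Little's law to $Q_{r}$ with arrival rate $(\lambda_{p}/\mu_{p})(1-f_{p})f_{ps}$ (as identified in the proof of Theorem~\ref{Thm1}) and effective service rate $\gamma(1-\lambda_{p}/\mu_{p})f_{r^*}$, and equate the two in steady state.

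The main technical obstacle lies in the relay-side integral: because $b=(2^{R_{0}}-1)^{-1}$ is generically non-integer, $(1-x)^{1/b}$ admits no finite binomial expansion, so the derivation must keep this factor unexpanded and rely on the outer binomial in the exponent $N-k$ to produce the monomial $t$-terms that integrate in closed form. A secondary check is that $a+u/b\ge u$ holds across the integration range (automatic when $b\le 1$), without which the effective upper limit $\min(u,b(u-a))$ in the inner integral must be handled by a case split.
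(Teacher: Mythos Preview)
Your treatment of $f_{s^*}$ is essentially the paper's: condition on whether $Q_r$ is empty, invoke the max-of-$(N-1)$ versus max-of-$N$ distribution for $\mathbf{h}_{s^*}$, and recover $\gamma$ as the utilization $\lambda_r/\mu_r$ of $Q_r$.

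The gap is in $f_{r^*}$. Your plan evaluates $\tfrac{1}{N-1}\sum_{k}\mathbb{P}[X_{(N)}\ge a+X_{(k)}/b]$ from the \emph{joint} density of $(X_{(k)},X_{(N)})$, thereby honoring the BPL constraint $\mathbf{h}_{\mathrm{I}}<\mathbf{h}_{r^*}$. But the expression (\ref{fskpDp_FPBPL}) is \emph{not} this exact probability: the paper derives it by deliberately relaxing that coupling and treating $\mathbf{h}_{r^*}$ and $\mathbf{h}_{\mathrm{I}}$ as \emph{independent}, each with the marginal from Appendix~\ref{BPL_dist} (the relaxation is stated explicitly in Section~\ref{discussion}). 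Under independence one simply computes $\int_0^\infty\bigl[1-(1-e^{-(a+h/b)})^N\bigr]\mathcal{P}_{\mathbf{h}_{\mathrm{I}}}(h)\,dh$; expanding $(1-e^{-h})^{k-1}$ produces $I_1$, and a further expansion of $(1-e^{-(a+h/b)})^N$ produces the $\binom{N}{\ell}$ sum over $\ell=0,\ldots,N$ in $I_2$. Your route instead expands $(t-e^{-a}t^{1/b})^{N-k}$, which yields $\binom{N-k}{j}$ coefficients and denominators $N-k+m+1-j+j/b$ --- a structurally different answer. A quick sanity check: at $N=2$, $b=1$, $a\to 0$ your exact computation gives $f_{r^*}=1$ (as it must, since $\mathbf{h}_{r^*}>\mathbf{h}_{\mathrm{I}}$ always under BPL), whereas (\ref{fskpDp_FPBPL}) gives $5/6$. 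So your approach, while more rigorous, will not land on the stated formula; to prove the lemma as written you must adopt the paper's independence approximation.
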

\begin{proof}
See Appendix \ref{EP-BPL}.
\end{proof}

\begin{lemma}\label{lemma4}
For AP-BSL, $f_{r^*}$ is given by
\begin{equation}\label{fkpDpAPBSL}
f_{r^*}\!\!=\!\!\beta^N\!(1\!-\!\beta^N\!)\!+ 
N \!\!\displaystyle \sum_{k=0}^{N-1}\!\! {\!N\!-\!1 \choose k\!} \!(\!-1\!)^{k} e^{-a(k+1)}\! 
\left[ \mathrm{I}_{3}\! -\! \mathrm{I}_{4} \right] \quad
\end{equation}
where
\begin{align}
& \mathrm{I}_{3}=\frac{N-1}{k+1} \displaystyle \sum_{\ell=0}^{N-2} {N-2 \choose \ell} 
\frac{(-1)^{\ell} }{{(\ell + 1})}  e^{-a(\ell+1)}  \label{I} \\ 
& 
\mathrm{I}_{4}= 
\frac{a}{b}e^{ab}(N-1)\! 
\displaystyle \sum_{\ell=0}^{N-2} \!\!{N-2 \choose \ell}\! (-1)^{\ell}
e^{\frac{a(1+b+\ell)(k+1-b)}{b}} \notag \\
& \phantom{= \frac{a}{b}e^{ab}(N-1)\! \displaystyle \sum_{\ell=0}^{N-2} x}
\times E_{1} \left[ \frac{a(1+b+\ell)(k+1)}{b} \right]. \label{II}
\end{align}
On the other hand, $f_{s^*}$ is given by (\ref{fksDsBSL}).
\end{lemma}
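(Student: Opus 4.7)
The secondary-link probability $f_{s^*}$ inherits the EP-BSL value (\ref{fksDsBSL}) verbatim: under BSL the node $s^*$ is always the argmax of $\mathbf{h}_{s_{i}}$, and under AP the secondary transmission either uses $P_{s^*}=(2^{R_{0}}-1)N_{0}/\mathbf{h}_{s^*}$ or is silenced, so by (\ref{P1}) the $s^*\to D_{s}$ link is successful iff $\mathbf{h}_{s^*}\ge a$, giving $1-\beta^{N}$. The real work is proving (\ref{fkpDpAPBSL}) for $f_{r^*}$. From Appendix \ref{BSL_dist} I may treat $\mathbf{h}_{s^*}$ as the maximum of $N$ i.i.d.\ $\mathrm{exp}(1)$ variables, $\mathbf{h}_{r^*}$ as the maximum of $N-1$ i.i.d.\ $\mathrm{exp}(1)$'s (taken over the non-$s^*$ users), and $\mathbf{h}_{\mathrm{I}}$ as a single $\mathrm{exp}(1)$, with the three mutually independent because the $\mathbb{S}\to D_{s}$ and $\mathbb{S}\to D_{p}$ link families are themselves independent. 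I then partition the relay-success event according to whether $\mathbf{h}_{s^*}<a$ (so AP silences $s^*$) or not.

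In the silent branch, Policy 2 of Section \ref{node_selection} re-selects $r^*$ as the best $\mathbf{h}_{r_{i}}$ over all $N$ users and lets it transmit alone, so the relay succeeds iff this re-selected $\mathbf{h}_{r^*}\ge a$; the independence of the two link families gives the contribution $\beta^{N}(1-\beta^{N})$, matching the first summand of (\ref{fkpDpAPBSL}). In the active branch, the AP-adapted power turns the relay success condition (\ref{P2}) into $\mathbf{h}_{\mathrm{I}}\le b\mathbf{h}_{s^*}(\mathbf{h}_{r^*}-a)/a$, so that branch contributes the triple integral of $N(1-e^{-x})^{N-1}(N-1)(1-e^{-y})^{N-2}e^{-x-y-z}$ over $\{x\ge a,\,y\ge a,\,0\le z\le bx(y-a)/a\}$. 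Integrating out $z$ first produces the factor $1-e^{-bx(y-a)/a}$ and splits the remainder into a constant and an exponential piece. The constant piece is just $\mathbb{P}[\mathbf{h}_{s^*}\ge a]\,\mathbb{P}[\mathbf{h}_{r^*}\ge a]=(1-\beta^{N})(1-\beta^{N-1})$; using the binomial identities $N\binom{N-1}{k}/(k+1)=\binom{N}{k+1}$ and $\sum_{k=0}^{N-1}\binom{N-1}{k}(-1)^{k}e^{-a(k+1)}/(k+1)=(1-\beta^{N})/N$ together with its $N\!\to\!N-1$ analogue, I can rewrite this piece as $N\sum_{k}\binom{N-1}{k}(-1)^{k}e^{-a(k+1)}\mathrm{I}_{3}$, exactly reproducing the $\mathrm{I}_{3}$ term.

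The exponential piece is the principal obstacle. Binomial-expanding $(1-e^{-x})^{N-1}$ and $(1-e^{-y})^{N-2}$ reduces it to a double $(k,\ell)$-sum of $\int_{a}^{\infty}\!\!\int_{a}^{\infty}e^{-(k+1)x-(\ell+1)y-bx(y-a)/a}\,dy\,dx$. The $x$-integral is elementary, yielding $e^{-a(k+1)-b(y-a)}\big/\bigl[(k+1)+b(y-a)/a\bigr]$. The successive substitutions $u=b(y-a)/a$, then $w=u+(k+1)$, then $t=a(b+\ell+1)w/b$ turn the remaining $y$-integral into $E_{1}\bigl[a(b+\ell+1)(k+1)/b\bigr]$ pre-multiplied by $(a/b)\exp\bigl(-(k+1)a+ab-(b+\ell+1)a+a(b+\ell+1)(k+1)/b\bigr)$. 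Collapsing this exponent to $ab+a(1+b+\ell)(k+1-b)/b-a(k+1)$ and re-attaching the outer $N\binom{N-1}{k}(-1)^{k}e^{-a(k+1)}$ factor reproduces $\mathrm{I}_{4}$ exactly; subtracting from the $\mathrm{I}_{3}$ piece yields (\ref{fkpDpAPBSL}). The main book-keeping hazard is keeping the substitution chain and binomial indices straight so that the accumulated exponential prefactor collapses to the stated form rather than one differing by a spurious $e^{\pm ab}$ or $e^{\pm a(k+1)}$ correction.
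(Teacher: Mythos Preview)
Your proposal is correct and follows essentially the same route as the paper's Appendix~\ref{AP-BSL}: the same silent/active split via total probability, the same re-selection of $r^*$ over all $N$ users in the silent branch giving $\beta^N(1-\beta^N)$, the same active-branch condition $\mathbf{h}_{\mathrm{I}}\le b\mathbf{h}_{s^*}(\mathbf{h}_{r^*}-a)/a$, the same binomial expansions in $k$ and $\ell$, and the same reduction of the remaining one-dimensional integral to $E_1$. The only organisational difference is that the paper first isolates $\mathbb{P}[\mathbf{h}_{\mathrm{I}}\le z\mathbf{h}_{s^*},\,\mathbf{h}_{s^*}\ge a]$ as a closed form in $z$ (its equation~(\ref{18})) and then integrates over $\mathbf{h}_{r^*}$, so the $1/(k+1)$ term yields $\mathrm{I}_3$ directly without invoking the binomial identity $N\sum_k\binom{N-1}{k}(-1)^k e^{-a(k+1)}/(k+1)=1-\beta^N$ that you use to recognise the constant piece; either bookkeeping leads to the same expression.
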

\begin{proof}
See Appendix \ref{AP-BSL}.
\end{proof}


\begin{figure*}[t]
 \centering
 \subfigure[AP-BSL.]
 {\includegraphics[width=1\columnwidth , height=0.65\columnwidth]{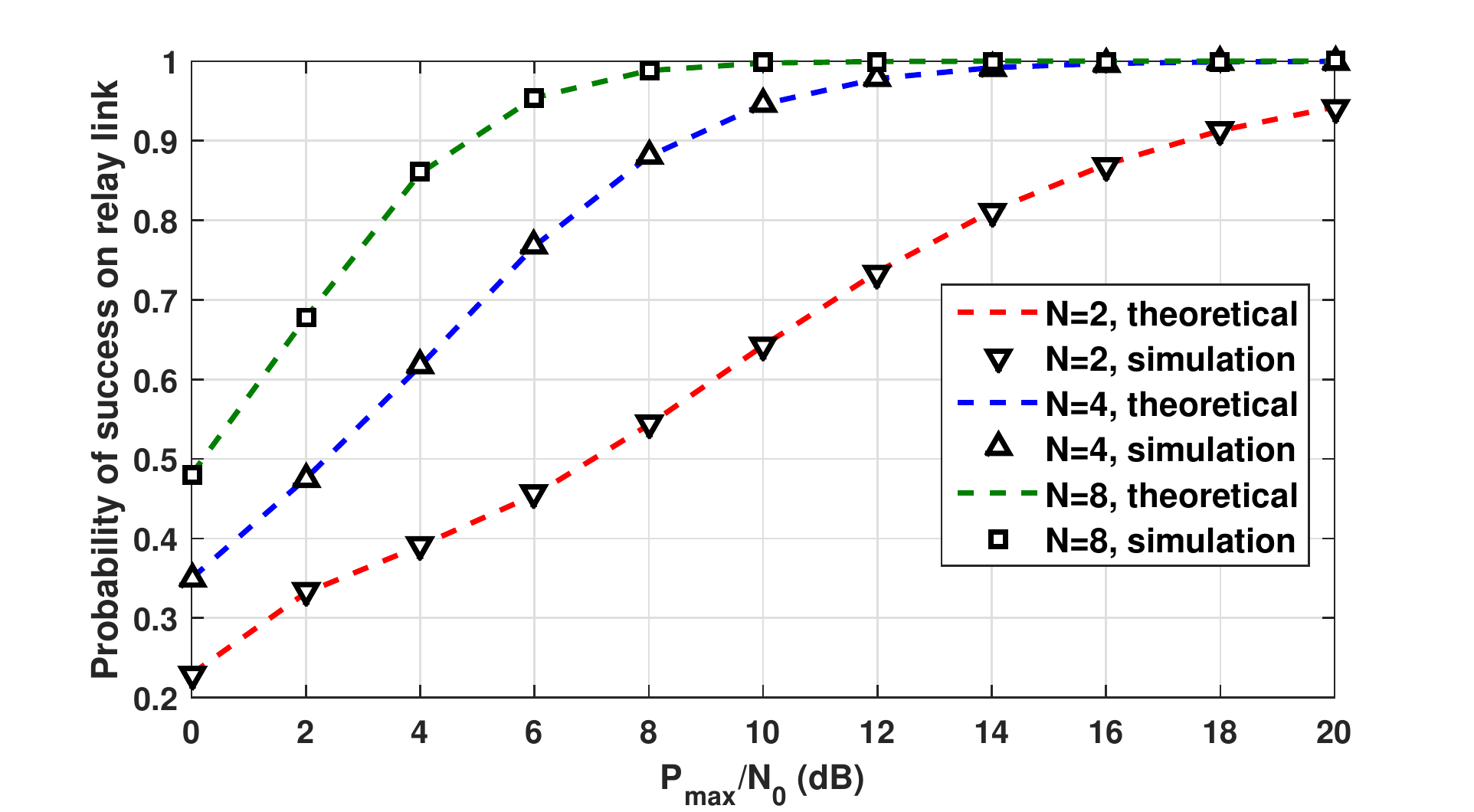}\label{Fig2}}
\subfigure[AP-BPL.]
 {\includegraphics[width=1\columnwidth , height=0.65\columnwidth]{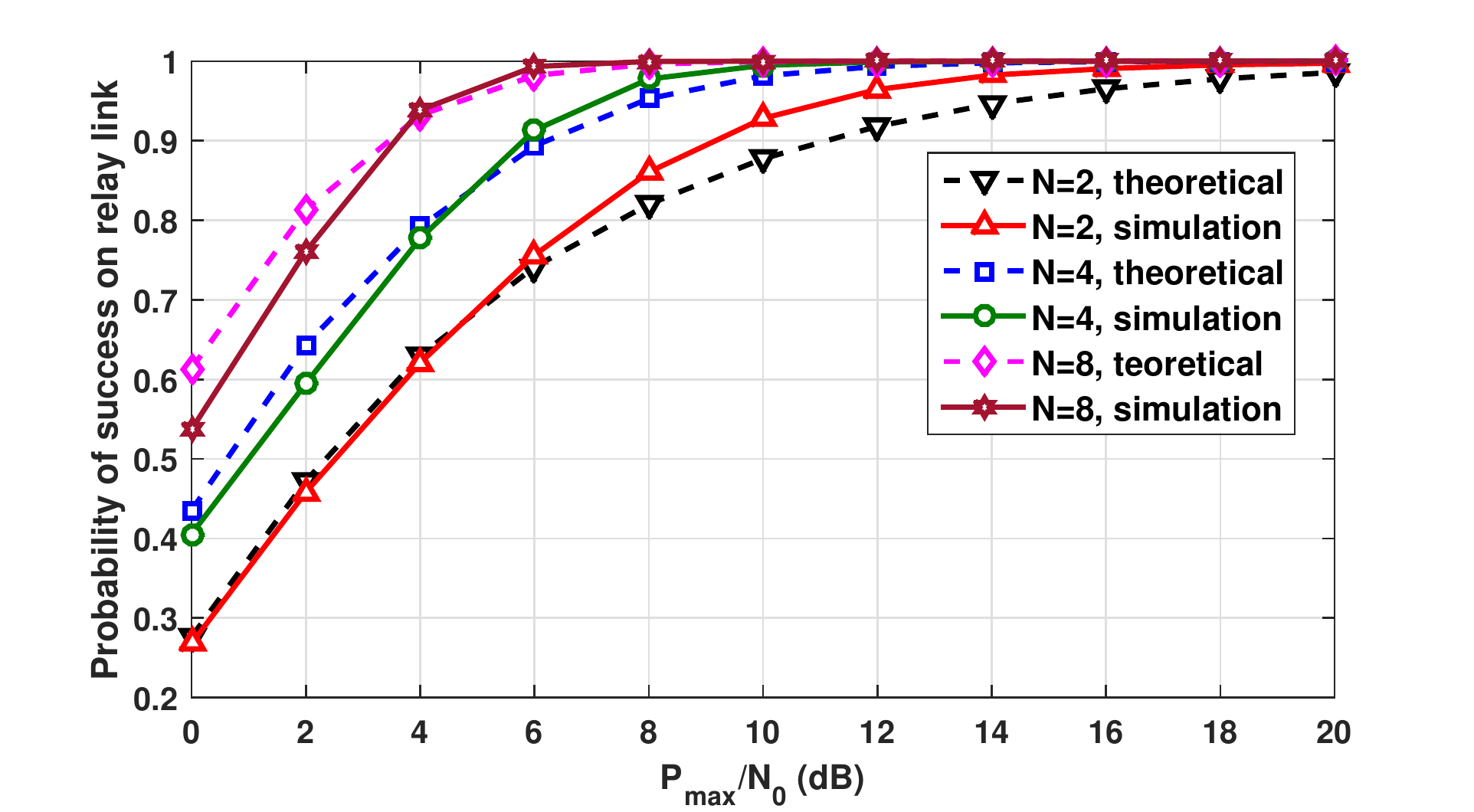}\label{Fig3}}
 \caption{The probability of transmission success on the relay link versus $P_{\mathrm{max}}/N_{0}$ for AP-based schemes.}
 \label{fr_sim_theo}
  \vspace{-3mm}
\end{figure*}

\begin{lemma}\label{lemma5}
For AP-BPL, $f_{r^*}$ is given by
\begin{align} \label{fr_APBPL}
f_{r^*}\!\!=\!\!\!  &
\displaystyle \sum_{k=1}^{N-1} \sum_{\ell=0}^{k-1} \sum_{m=0}^{N-2}\!\!  
{\!N\!-\!1\! \choose\! k\!-\!1\!}\! {\!k\!-\!1\! \choose\! \ell\!} \!{\!N\!-\!2\! \choose\! m\!} 
\!
\frac{(-\!1)^{m+\ell}N^2\left[ \mathrm{I}_{5}\! -\! \mathrm{I}_{6} \right]}{(N\!-\!k\!+\!\ell\!+\!1)}
\notag \\
& +
\beta^{N-1}(1-\beta^{N})  
\end{align}
where
\begin{align}
& \mathrm{I}_{5}=
\frac{e^{-a(m+1)}}{(m+1)}
\displaystyle \sum_{n=0}^{N-1}
{N-1 \choose n} \frac{(-1)^{n} e^{-a(n+1)}}{(n+1)} \label{I5}  \\
& 
\mathrm{I}_{6}\!=\!\! 
\displaystyle \sum_{n=0}^{N-1}\!\! {\!N\!-\!1 \choose\! n\!}\!   
\frac{a(-1)^{n} e^{-a(m+n-t+2)}}{(t-n-1)}
e^{tc} E_{1}\!\left[ t(a+c) \right] \label{I6}
\end{align}
and the terms $t$ and $c$ are 
\begin{align}
& t=b(N-k+\ell+1)+n+1 \label{t} \\
& c=a\left[ \frac{m+1}{b(N-k+\ell+1)}-1 \right]. \label{c}
\end{align}
On the other hand, $f_{s^*}$ is given by (\ref{fksDs_UPBPL}). 
\end{lemma}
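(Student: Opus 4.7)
The plan is to mirror Lemma \ref{lemma4} and split the success event on the relay link into two disjoint scenarios dictated by the AP rule of Section \ref{AP}. When $\mathbf{h}_{s^*}<a$, the CSB silences $s^*$ (the RHS of (\ref{P1}) exceeds $P_{\mathrm{max}}$); then $r^*$ transmits alone on an interference-free link and succeeds iff $\mathbf{h}_{r^*}\ge a$. When $\mathbf{h}_{s^*}\ge a$, $s^*$ is active at its minimal admissible power $P_{s^*}=(2^{R_{0}}-1)N_{0}/\mathbf{h}_{s^*}$, and substituting this into (\ref{P2}) with $P_{r^*}\le P_{\mathrm{max}}$ is equivalent to $\mathbf{h}_{r^*}\ge a\bigl(1+\mathbf{h}_{\mathrm{I}}/(b\mathbf{h}_{s^*})\bigr)$. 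The formula (\ref{fksDs_UPBPL}) for $f_{s^*}$ is inherited from Lemma \ref{lemma3}: whether at $P_{\mathrm{max}}$ (EP) or at the minimal admissible power (AP), the secondary link succeeds iff $\mathbf{h}_{s^*}\ge a$, so the functional form is unchanged; only the numerical value of $\gamma$ shifts, because (\ref{gamma}) depends on the AP-BPL value of $f_{r^*}$ being derived here.

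The silenced case is immediate. In BPL selection, $\mathbf{h}_{s^*}$ is the maximum of $N-1$ independent secondary-link gains (those of the SUs other than $r^*$), so $\mathbb{P}[\mathbf{h}_{s^*}<a]=\beta^{N-1}$, while $\mathbf{h}_{r^*}$ is the maximum of all $N$ primary-link gains and is independent of the secondary family, giving $\mathbb{P}[\mathbf{h}_{r^*}\ge a]=1-\beta^{N}$. Their product produces the additive term $\beta^{N-1}(1-\beta^{N})$ appearing in (\ref{fr_APBPL}).

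For the active case I would insert the BPL joint density of $(\mathbf{h}_{r^*},\mathbf{h}_{\mathrm{I}},\mathbf{h}_{s^*})$ derived in Appendix \ref{BPL_dist}, call it $f(u,v,w)$, and evaluate
\begin{equation*}
\int_{a}^{\infty}\!\!\int_{0}^{\infty}\!\!\int_{\max(v,\, a+av/(bw))}^{\infty} f(u,v,w)\,du\,dv\,dw.
\end{equation*}
Expanding the CDF factor $(1-e^{-w})^{N-2}$ in the $\mathbf{h}_{s^*}$ marginal by the binomial theorem yields the sum over $m$ with weight $\binom{N-2}{m}(-1)^{m}$. Writing the joint law of $(\mathbf{h}_{r^*},\mathbf{h}_{\mathrm{I}})$ as a uniform mixture over the rank $k\in\{1,\dots,N-1\}$ of $\mathbf{h}_{\mathrm{I}}$ among the $N-1$ primary gains strictly below $\mathbf{h}_{r^*}$, and then applying the binomial theorem to $(1-e^{-v})^{k-1}$, generates $\sum_{k=1}^{N-1}\binom{N-1}{k-1}\sum_{\ell=0}^{k-1}\binom{k-1}{\ell}(-1)^{\ell}$; a further binomial expansion of $(e^{-v}-e^{-u})^{N-k-1}$ contributes the innermost sum over $n$ seen in $I_{5}$ and $I_{6}$. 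The $u$- and $w$-integrals are then elementary exponentials, while the remaining $v$-integral reduces, via the standard identity $\int_{0}^{\infty}e^{-pv}/(v+q)\,dv=e^{pq}E_{1}[pq]$, to the $e^{tc}E_{1}[t(a+c)]$ factor in (\ref{I6}), with $t$ and $c$ identified through (\ref{t})-(\ref{c}).

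The principal obstacle is the crossover in the lower $u$-limit $\max(v,\, a+av/(bw))$: for $w>a/b$ and $v\ge abw/(bw-a)$ the binding constraint switches from $u\ge a+av/(bw)$ to $u\ge v$. Splitting the domain along this boundary produces two contributions; the ``naive'' piece that treats the $u\ge a+av/(bw)$ branch as if it applied everywhere furnishes $I_{5}$, and the correction for the sub-region where $u\ge v$ binds furnishes $I_{6}$, in which the crossover manifests as the exponent offset $e^{-a(m+n-t+2)}$ and the $E_{1}$ factor. Matching signs, summation indices, and the definitions of $t$ and $c$ through this bookkeeping is the most tedious step, but it is routine once the integration has been organised this way.
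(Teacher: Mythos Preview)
Your decomposition into the silenced and active cases, and the resulting additive term $\beta^{N-1}(1-\beta^{N})$, match the paper exactly. The divergence is in how you handle the active case. You propose to integrate against the \emph{true joint} law of $(\mathbf{h}_{r^*},\mathbf{h}_{\mathrm{I}})$, enforcing the order-statistic coupling $\mathbf{h}_{\mathrm{I}}<\mathbf{h}_{r^*}$ through the lower limit $\max\bigl(v,\,a+av/(bw)\bigr)$ and the factor $(e^{-v}-e^{-u})^{N-k-1}$. The paper does \emph{not} do this: it deliberately relaxes the constraint $\mathbf{h}_{\mathrm{I}}<\mathbf{h}_{r^*}$ and treats $\mathbf{h}_{r^*}$ and $\mathbf{h}_{\mathrm{I}}$ as independent, each with its marginal from Appendix~\ref{BPL_dist}. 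The authors flag this relaxation explicitly (Section~\ref{discussion} and the discussion of Fig.~\ref{Fig3}) and show numerically that the resulting approximation is tight. Consequently the formula in Lemma~\ref{lemma5} is an \emph{approximate} closed form, and your exact route would not reproduce it.

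Two concrete symptoms of the mismatch: (i) the inner $n$-sum in $\mathrm{I}_{5}$ and $\mathrm{I}_{6}$ runs from $0$ to $N-1$ with weight $\binom{N-1}{n}$, which in the paper comes from the binomial expansion of $(1-e^{-w})^{N-1}$ in the \emph{marginal} density (\ref{W_BPL}) of $\mathbf{h}_{r^*}$; your expansion of $(e^{-v}-e^{-u})^{N-k-1}$ would instead give $\sum_{n=0}^{N-k-1}\binom{N-k-1}{n}$, a different range and weight. (ii) Your reading of $\mathrm{I}_{5}$ and $\mathrm{I}_{6}$ as the two pieces of a $\max(v,\cdot)$ domain split is not what happens in the paper. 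There is no $u\ge v$ branch at all; $\mathrm{I}_{5}$ and $\mathrm{I}_{6}$ are simply the two terms inside the bracket of (\ref{202}) --- essentially the ``$1$'' and the ``$-e^{-zy}$'' parts of the inner $\mathbf{h}_{\mathrm{I}}$-integral --- each subsequently integrated against the independent marginal $\mathcal{P}_{\mathbf{h}_{r^*}}(w)$ over $w\in[a,\infty)$, with (\ref{integration}) producing the $E_{1}$ factor in $\mathrm{I}_{6}$. To recover (\ref{fr_APBPL}) you must adopt the same independence relaxation rather than the exact joint law.
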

\begin{proof}
See Appendix \ref{AP-BPL}.
\end{proof}

\section{Numerical Results}\label{results}
In this section, we validate the closed-form expressions derived in the paper via comparing theoretical and numerical simulation results. We investigate the system performance in terms of the primary and secondary throughput as well as the average primary packets' delay. In addition, we quantify the average power consumption at the SUs. Furthermore, we conduct performance comparisons between the four strategies resulting from the proposed power allocation and SU selection policies. Accordingly, we draw insights about the benefit of employing the proposed power allocation schemes. We set $P_{0}/N_{0}=10$ dB. Results are averaged over $10^6$ time slots.

Theorems \ref{Thm1} and \ref{Thm2} provide closed-form expressions for primary and secondary throughput as well as average queueing delay for primary packets. Generic expressions have been provided that work for any combination of power allocation and node selection policies. These expressions are functions of the probabilities of successful transmissions on relay and secondary links, i.e., $f_{r^*}$ and $f_{s^*}$. This fact has been thoroughly addressed in the appendices, where the four different power allocation and node selection policies have been analyzed. We start by validating our theoretical findings through simulations. Towards this objective, the analytical expressions for $f_{r^*}$, derived in Appendix \ref{AP-BSL} and \ref{AP-BPL}, are compared to their corresponding simulation results for both AP-BSL and AP-BPL in Fig. \ref{fr_sim_theo}. We set a target rate $R_{0}=1.5$ (bits/channel use) and we choose $\sigma_{p}^{2}=0.25$. Fig. \ref{Fig2} shows a perfect match of theoretical and simulation results for AP-BSL for any number of SUs, $N$. However, for AP-BPL, Fig. \ref{Fig3} shows a slight deviation between both results. This difference is attributed to the relaxation of the constraint that $\mathbf{h}_{\mathrm{I}} < \mathbf{h}_{r^*}$ in the derivation presented in Appendix \ref{AP-BPL}, where we treat $\mathbf{h}_{\mathrm{I}}$ and $\mathbf{h}_{r^*}$ as independent random variables. This constraint is an immediate consequence of the node selection policy presented in Section \ref{BPL}. The relaxation has been done for the sake of mathematical tractability. Nevertheless, Fig. \ref{Fig3} shows that the constraint relaxation has a minor effect on the obtained closed-form expression for $f_{r^*}$. This validates our theoretical findings. Fig. \ref{fr_sim_theo} show that $f_{r^*}$ consistently increases as the number of SUs increases for both AP-based schemes. This behavior is also true for EP-based schemes and is attributed to multi-user diversity gains obtained through increasing $N$.


\begin{figure*}[t]
 \centering
 \subfigure[Maximum achievable PU throughput versus $P_{\mathrm{max}}/N_{0}$.]
 {\includegraphics[width=1\columnwidth , height=0.65\columnwidth]{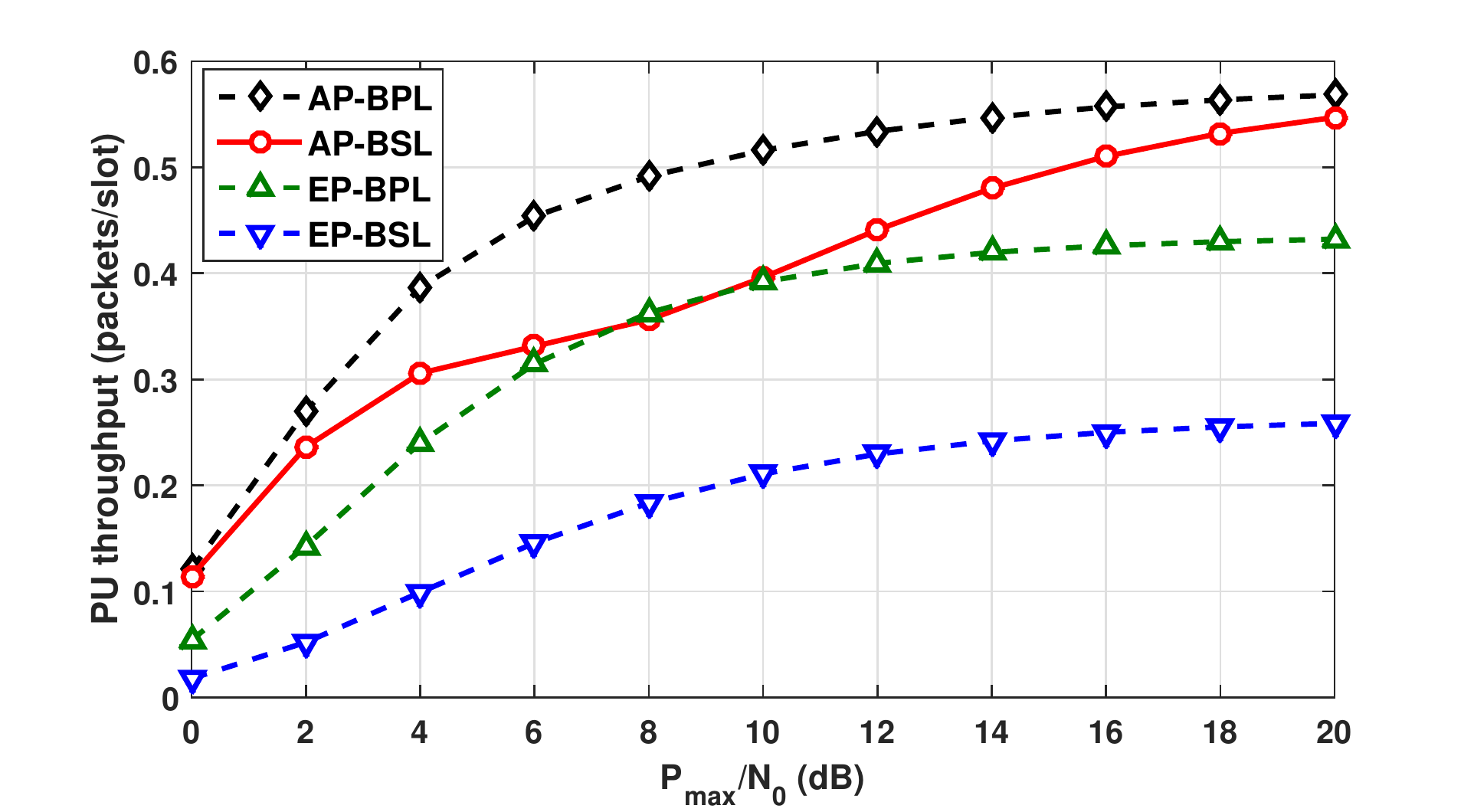}\label{Fig4}}
\subfigure[SU throughput versus $\lambda_{p}$.]
 {\includegraphics[width=1\columnwidth , height=0.65\columnwidth]{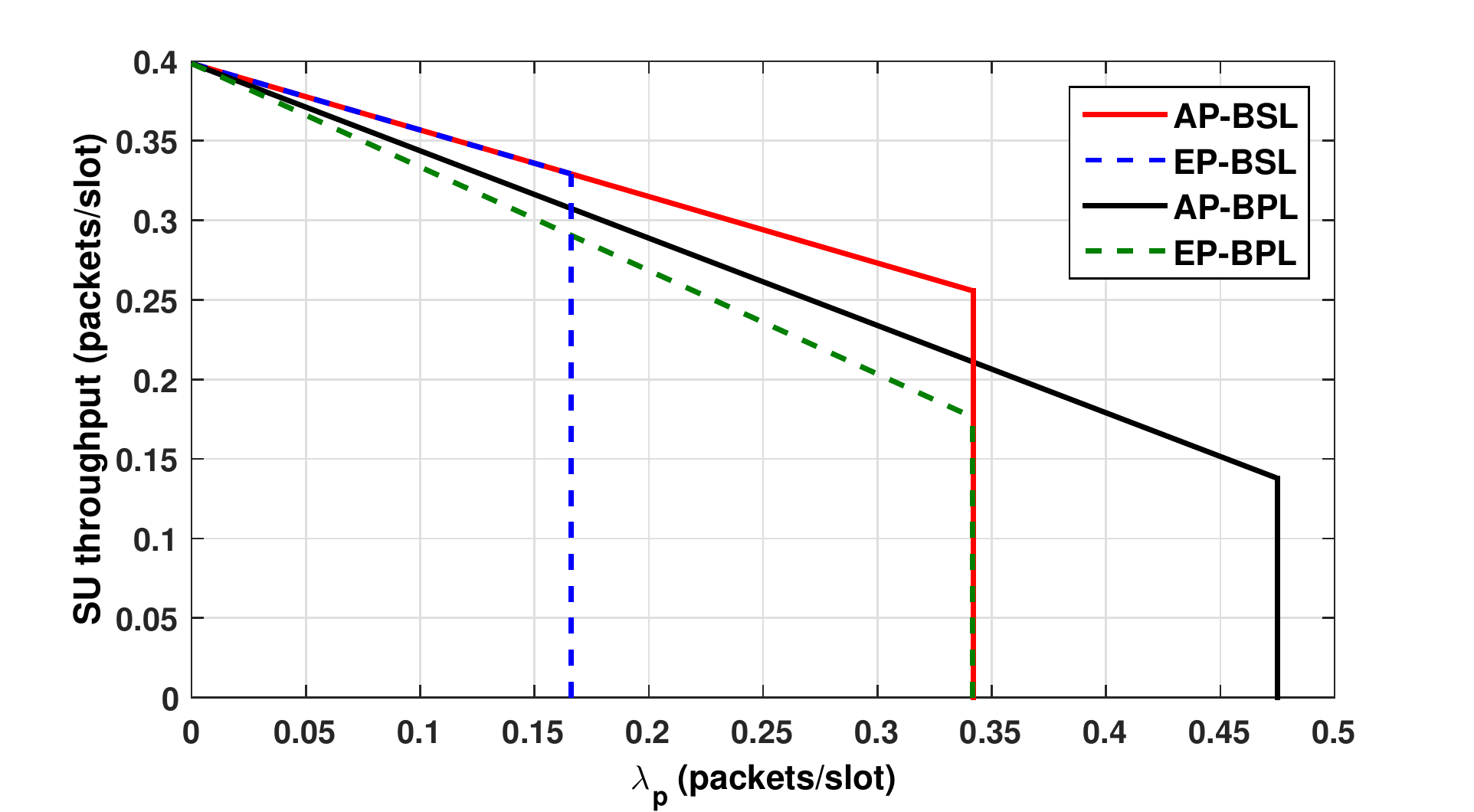}\label{Fig5}}
 \caption{The throughput of the PU and SUs for all combinations of power allocation and node selection policies.}
 \label{PU_SU_thrpt}
  \vspace{-3mm}
\end{figure*}


\begin{figure*}[t]
 \centering
 \subfigure[Average primary packets' delay versus $P_{\mathrm{max}}/N_{0}$.]
 {\includegraphics[width=1\columnwidth , height=0.65\columnwidth]{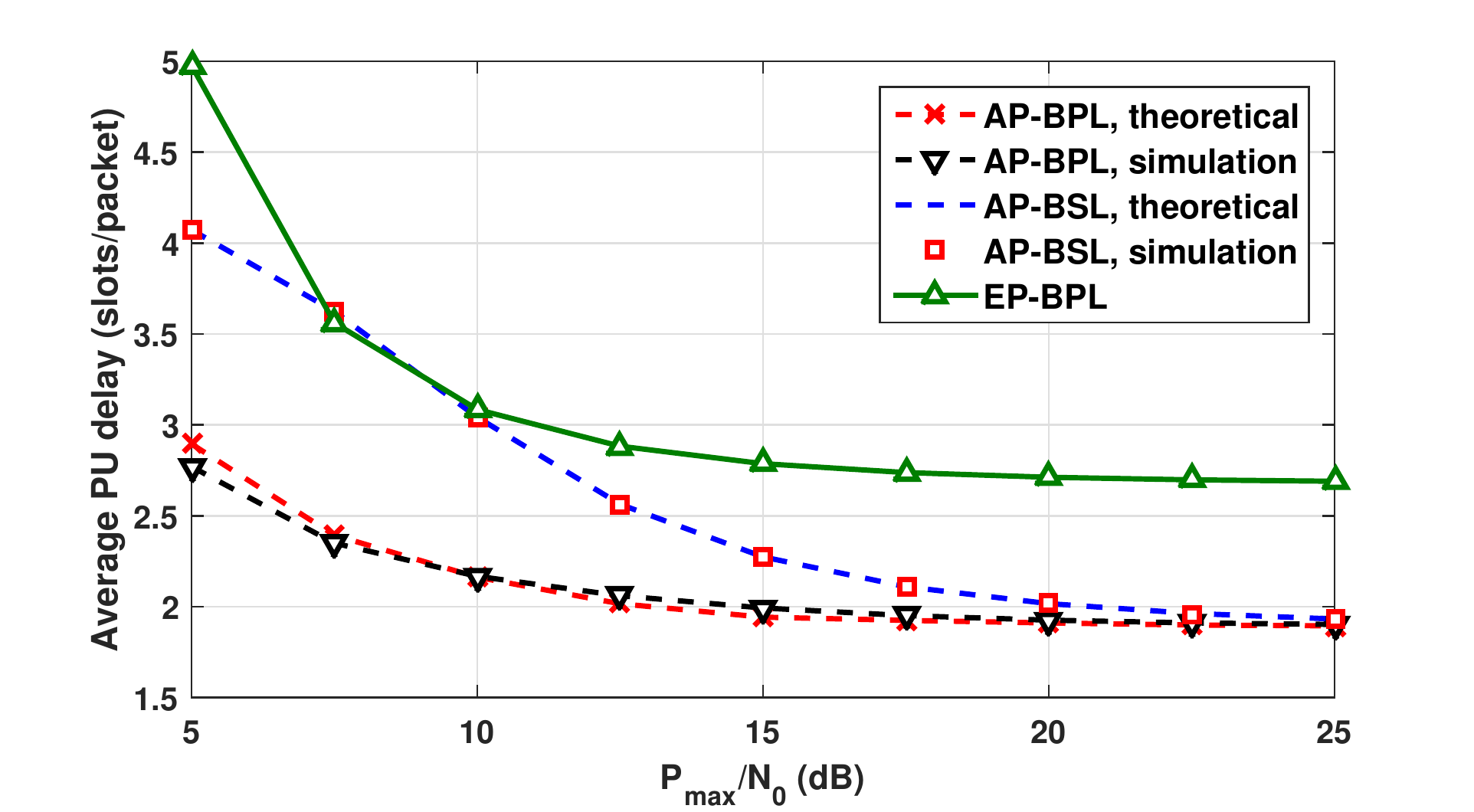}\label{Fig6}}
\subfigure[Average primary packets' delay versus $\lambda_{p}$.]
 {\includegraphics[width=1\columnwidth , height=0.65\columnwidth]{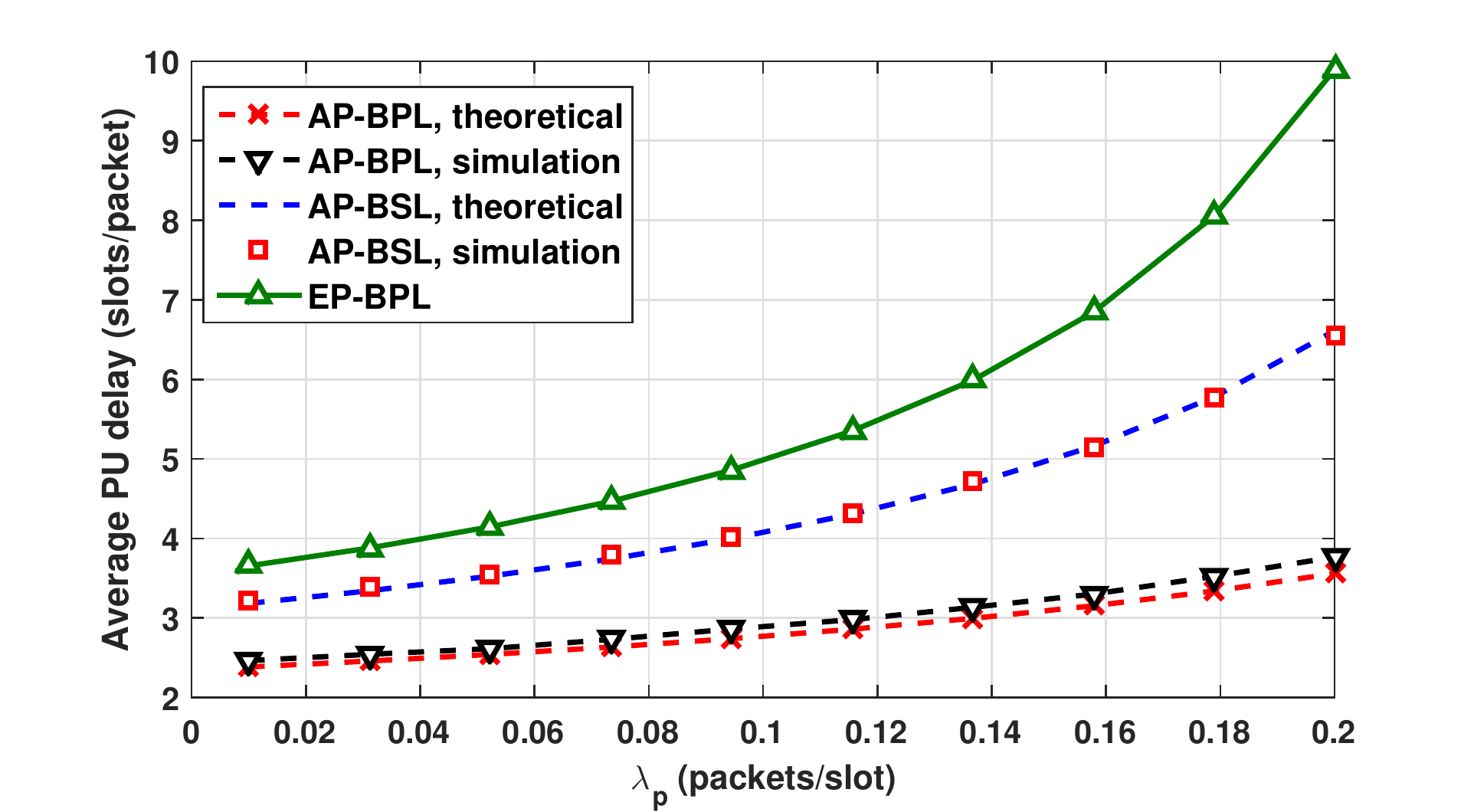}\label{Fig7}}
 \caption{The average queueing delay of PU's packets for different combinations of power allocation and node selection policies.}
 \label{PU_delay}
  \vspace{-3mm}
\end{figure*}

We investigated the effect of varying $N$ in Fig. \ref{fr_sim_theo}. Without loss of generality, the rest of the results are presented for $N=2$, $R_{0}=2$ (bits/channel use), and $\sigma_{p}^{2}=0.25$. We proceed with presenting the throughput of the PU and the SUs for all combinations of power allocation and node selection policies in Fig. \ref{PU_SU_thrpt}. In Fig. \ref{Fig4}, we plot the maximum achievable PU throughput, i.e., maximum achievable $\lambda_{p}$ given by (\ref{PU_thrpt}) in Theorem \ref{Thm1}, versus $P_{\mathrm{max}}/N_{0}$. AP-BPL is shown to outperform all other schemes. Moreover, it is evident that AP-based schemes outperform EP-based schemes \cite{Krikidis}, irrespective of the node selection policy employed. In Fig. \ref{Fig5}, we plot the SU throughput versus $\lambda_{p}$ at $P_{\mathrm{max}}/N_{0}=7$ dB. For the same node selection policy, the throughput region of the AP-based schemes is shown to strictly contain that of the EP based scheme. Furthermore, at every feasible $\lambda_{p}$ for EP-BPL, higher SU throughput is attained by AP-BPL. Thus, power adaptation expands the stable throughput region. This shows the superiority of AP-based schemes in both PU and SU throughput over their EP-based counterparts.

In Fig. \ref{PU_delay}, we study the average delay encountered by the PU packets. We refrain from plotting the results corresponding to EP-BSL to get a clear view of the comparison. EP-BSL yields much worse delay than the other three strategies. We plot the average primary packet delay versus $P_{\mathrm{max}}/N_{0}$ in Fig. \ref{Fig6}. As the available power resources increase, i.e., $P_{\mathrm{max}}/N_{0}$ increases, delay decreases. We attain lower average delay through power adaptation. As expected, AP-BPL holds its position as the best scheme with respect to PU. Furthermore, we investigate the fundamental throughput-delay tradeoff in Fig. \ref{Fig7}. We plot the average packet delay for the PU versus its throughput at $P_{\mathrm{max}}/N_{0}=5$ dB. Intuitively, when a node needs to maintain a higher throughput, it loses in terms of the average delay encountered by its packets. Given that the system is stable, the node's throughput equals its packet arrival rate. Thus, increased throughput means injecting more packets into the system resulting in a higher delay. Furthermore, Fig. \ref{Fig7} shows that strictly lower average PU delay is attained via AP-based schemes compared to EP allocation in \cite{Krikidis}. It can also be noticed that AP-BPL is still in the leading position among all schemes in terms of both throughput and delay. Moreover, we validate the obtained closed-form expressions for average PU delay via simulations. Theoretical and simulation results for AP-BSL perfectly coincide. However, for AP-BPL, the slight deviation between theory and simulations is attributed to the relaxation of the constraint $\mathbf{h}_{\mathrm{I}} < \mathbf{h}_{r^*}$ discussed earlier.

Finally, we plot the average powers transmitted by the SUs in Fig. \ref{Fig8}, i.e., average $P_{s^*}$ and $P_{r^*}$, normalized to $N_{0}$, versus $P_{\mathrm{max}}/N_{0}$. Clearly, the AP-based schemes consume significantly less power than the EP assignment represented by the $45^ \circ$ line. For
the average power transmitted on the link $s^* \rightarrow D_s$, the first intuition that comes to mind is that AP-BSL policy results in the minimum average power. However, this is only true at high $P_{\mathrm{max}}/N_{0}$ values. It is noticed that the results corresponding to AP-BPL show
slightly less power consumption than that of AP-BSL at low $P_{\mathrm{max}}/N_{0}$ values. This behavior approximately holds till $P_{\mathrm{max}}/N_0=10$ dB. This is attributed to the nature of the proposed AP policy which sets $s^*$ silent if its maximum power constraint is not sufficient to satisfy the condition of success (\ref{P1}). Since in AP-BSL, $s^*$ always sees the best link to $D_{s}$, the number of slots in which it remains idle is less than that in AP-BPL. This yields a higher throughput at the expense of slightly higher average transmitted power. The same argument holds for comparing selection policies on the link $r^* \rightarrow D_p$.

\vspace{-3mm}
\subsection{Discussion on the Assumptions}\label{discussion}
The above system analysis is performed under the assumption of fully-backlogged SUs. The motivation behind this assumption is two-fold. First, backlogged SUs represent the worst case scenario from the PU's point of view. Since we consider cooperative communications, a portion of the
PU's data is delivered to its intended destination via the relay link, i.e., $r^* \rightarrow D_{p}$. However, the transmission of secondary
packets causes interference to the relay link as indicated earlier. This interference is persistent in case of backlogged SUs. Therefore, our results can be considered as a lower bound on the achievable performance of the PU, i.e., a lower bound on throughput and upper bound on delay. Furthermore, the
backlogged SUs assumption mitigates the interaction between the queues of the SUs. This renders the system mathematically tractable. Nevertheless,
stochastic arrivals to the SUs' queues can still be considered and queues interaction can be tackled using the dominant system approach originally introduced in \cite{rao}. However, this is out of the scope of the paper.

It is worth noting that in the derivations corresponding to BPL-based schemes, i.e., in Sections \ref{EP-BPL} and \ref{AP-BPL} of the Appendix, we consider $\mathbf{h}_{\mathrm{I}}$ and $\mathbf{h}_{r^*}$ independent random variables. However, they are coupled through the constraint $\mathbf{h}_{\mathrm{I}} < \mathbf{h}_{r^*}$. This constraint is an immediate consequence of the BPL node selection policy. We relax this constraint to render the problem mathematically tractable. Nevertheless, we quantify the effect of relaxing this constraint on the obtained closed-form expressions for $f_{r^*}$ through numerical simulation results presented in Section \ref{results}.

Finally, we assume that SUs perfectly sense the PU's activity. This assumption has been made to avoid adding further complexity to the analysis which might distort the main message behind the paper. Nevertheless, imperfect sensing has been studied extensively in the literature. Reference \cite{yucek2009survey} presents a comprehensive survey of spectrum sensing techniques in cognitive radio networks.

\begin{figure}[t]
	\centering
	\includegraphics[width=1\columnwidth , height=0.65\columnwidth]{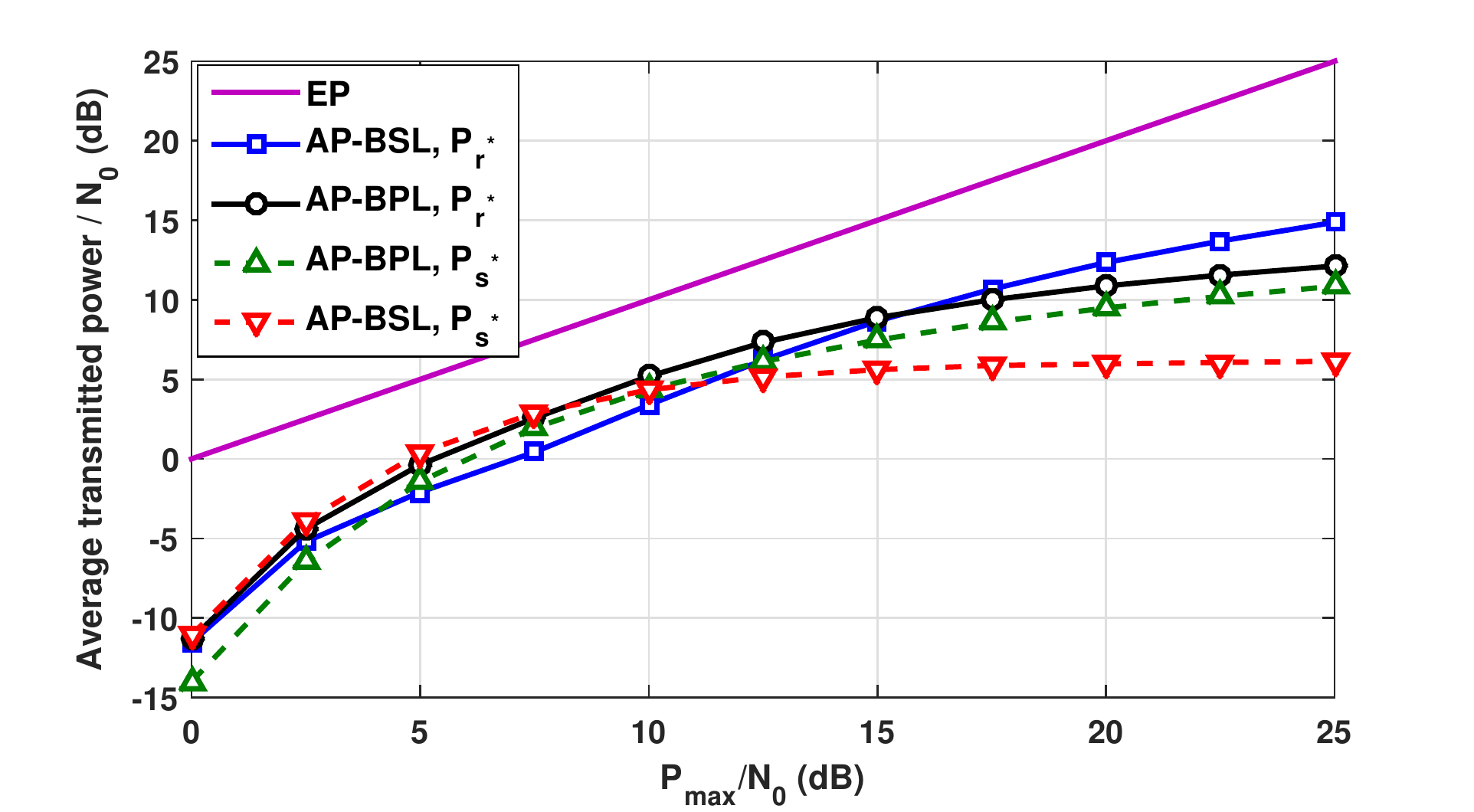}
	\caption{Average SUs' transmitted power normalized to $N_{0}$ versus $P_{\mathrm{max}}/N_0$.}
	\vspace{-3mm}
	\label{Fig8}
\end{figure}

\section{Conclusion} \label{sect:conclusion}
We discuss a power allocation policy for cognitive radio networks with multiple relays and propose different relaying protocols depending on the network utility function. The effect of SU power adaptation on
throughput and average delay is thoroughly investigated. We derive the closed-form expressions for the achieved throughput and average delay and validate the results through numerical simulations. Dynamically adapting the transmission powers at the SUs according to the channel conditions results in substantial improvement in primary and secondary throughput. The SUs under EP-based schemes always transmit at maximum power. This results in excessive interference on the relay link which is not the case for the AP-based schemes. Power adaptation is performed at the SUs to transmit with the minimum power required for the successful
transmission. To further benefit the system, the SUs back-off if their maximum permissible power is not sufficient to yield a successful transmission and avoid guaranteed outage events. The back-off benefits the other transmitting SU by reducing the incurred interference and thereby, causes throughput increase.
The AP-based schemes are shown to reduce the average queuing delay encountered by the PU packets compared to their EP-based counterparts.
We perform mathematical analysis of the proposed schemes and show numerically that the AP-based schemes save energy; and achieve higher throughput and lower delay simultaneously.

\appendices

\section{Distributions of $\mathbf{h}_{r^*}$, $\mathbf{h}_{\mathrm{I}}$, and $\mathbf{h}_{s^*}$ for BSL} \label{BSL_dist}
Referring to the policy described in Section \ref{BSL}, 
\begin{equation}
\mathbf{h}_{s^*}=\underset{i \in \{1,\hdots,N\}}{{\mathrm{max.}}}~\mathbf{h}_{s_{i}}.
\end{equation}
Therefore, the probability density function (PDF) of $\mathbf{h}_{s^*}$ is
\begin{equation}\label{Y_BSL}
\mathcal{P}_{\mathbf{h}_{s^*}}(h)=N e^{-h} (1-e^{-h})^{N-1},~h \geq 0.
\end{equation}     
As indicated earlier, the fact that $s^*$ has the best link to $D_{s}$ gives absolutely no information about its link quality to $D_{p}$ and hence, 
\begin{equation}\label{X_BSL}
\mathcal{P}_{\mathbf{h}_{\mathrm{I}}}(h)=e^{-h},~ h \geq 0.
\end{equation}
On the other hand, 
\begin{equation}\label{W_BSL}
\mathcal{P}_{\mathbf{h}_{r^*}}(h)=(N-1) e^{-h} (1-e^{-h})^{N-2},~ h \geq 0.
\end{equation} 
We present a rigorous argument to prove that (\ref{W_BSL}) is true. Consider the $2N$ random variables representing the link qualities of the $N$ SUs to $D_{p}$ and $D_{s}$. The SU with the best link to $D_{s}$ is selected to transmit a packet of its own. This leaves $(N-1)$ possible candidates for relaying a primary packet to $D_{p}$. Among the $(N-1)$ random variables representing the link qualities of these candidates to $D_{p}$, their maximum is selected. This maximum has one of the following two possibilites. 
\begin{itemize}
\item It is the second maximum of $\left\{\mathbf{h}_{r_{i}}\right\}_{i=1}^{N}$. This occurs only when the same SU has the best link to both $D_{p}$ and $D_{s}$ simultaneously. A specific SU has the best link to both destinations simultaneously with probability $1/N^2$. Taking into account $N$ such possibilities, one for every SU, $\mathbf{h}_{r^*}$ is the second maximum of $\left\{\mathbf{h}_{r_{i}}\right\}_{i=1}^{N}$ with probability $1/N$. 
\item It is the maximum of $\left\{\mathbf{h}_{r_{i}}\right\}_{i=1}^{N}$. This occurs whenever $s^*$ is not the SU having the best link to $D_{p}$, which has a probability $1-(1/N)$.       
\end{itemize}      
The average distribution corresponding to the two possibilities presented above with their respective probabilities is exactly the same as the distribution of a maximum of $(N-1)$ i.i.d. exponential random variables with means $1$ each. This is an easy-to-show fact using order statistics arguments, omitted for brevity. The proof of (\ref{W_BSL}) is then concluded.

\section{Distributions of $\mathbf{h}_{r^*}$, $\mathbf{h}_{\mathrm{I}}$, and $\mathbf{h}_{s^*}$ for BPL} \label{BPL_dist}
According to the policy described in Section \ref{BPL}, 
\begin{equation}
\mathbf{h}_{r^*}=\underset{i \in \{1,\hdots,N\}}{{\mathrm{max.}}}~\mathbf{h}_{r_{i}}.
\end{equation}
Therefore, the PDF of $\mathbf{h}_{r^*}$ is
\begin{equation}\label{W_BPL}
\mathcal{P}_{\mathbf{h}_{r^*}}(h)=N e^{-h} (1-e^{-h})^{N-1},~h \geq 0.
\end{equation}     
On the other hand, 
\begin{equation}\label{Y_BPL}
\mathcal{P}_{\mathbf{h}_{s^*}}(h)=(N-1) e^{-h} (1-e^{-h})^{N-2},~ h \geq 0.
\end{equation} 
An argument similar to that used to derive the distribution of $\mathbf{h}_{r^*}$ in Appendix \ref{BSL_dist} is used to derive (\ref{Y_BPL}).

The SU with the best link to $D_{p}$ is selected to relay a primary packet. This eliminates the possibility that $s^*$ has the best link to $D_{p}$, i.e., $\mathbf{h}_{\mathrm{I}}$ can not be the maximum of $\left\{\mathbf{h}_{r_{i}}\right\}_{i=1}^{N}$. In other words, $\mathbf{h}_{I}$ can possibly be the $k$th order statistic of the $N$ random variables $\left\{\mathbf{h}_{r_{i}}\right\}_{i=1}^{N}$, where $k=1,\hdots,N-1$. The $k$th order statistic is by convention the $k$th smallest random variable.   
It remains to note that after the selection of $r^*$, the remaining $(N-1)$ SUs possess equal probabilities of having the best link to $D_{s}$. Consequently, $\mathbf{h}_{\mathrm{I}}$ is equally likely to be any $k$th order statistic of $\left\{\mathbf{h}_{r_{i}}\right\}_{i=1}^{N}$, $k=1,\hdots,N-1$. Then, the average distribution of these order statistics is given by 
\begin{equation} \label{X_BPL}
\mathcal{P}_{\mathbf{h}_{\mathrm{I}}}(h)\!\!=\!\!\frac{N}{N-1}\!\! \displaystyle \sum_{k=1}^{N-1}
\!\!{\!N\!-\!1 \choose k\!-\!1\!}  e^{-h(N-k+1)}   (1\!-\!e^{-h})^{k-1},~h \geq 0. 
\end{equation}

\section{Derivation of $f_{r^*}$ and $f_{s^*}$ for EP-BSL} \label{EP-BSL}
Using (\ref{outage}) and (\ref{R2}) along with the description of power allocation and node selection policies provided in Sections \ref{UP} and \ref{BSL}, respectively, we have
\begin{equation}\label{f}
f_{r^*}=\mathbb{P}\left[ \mathbf{h}_{r^*} > a+\frac{\mathbf{h}_{\mathrm{I}}}{b} \right].
\end{equation}
Then, total probability theory implies that
\begin{align}\label{TPT}
f_{r^*} =\int_{0}^{\infty} \mathbb{P} \left[ \mathbf{h}_{r^*} > a+ \frac{h}{b} \right]
\mathcal{P}_{\mathbf{h}_{\mathrm{I}}}(h)dh 
\end{align}
Thus, (\ref{TPT}) is readily solved via substituting by the distributions of the random variables $\mathbf{h}_{\mathrm{I}}$ and $\mathbf{h}_{r^*}$ provided in (\ref{X_BSL}) and (\ref{W_BSL}), respectively. We first note that 
\begin{equation}\label{W_CDF_BSL}
\mathbb{P}\left[ \mathbf{h}_{r^*} > w \right]=1-(1-e^{-w})^{N-1}, ~ w \geq 0
\end{equation}
and then use (\ref{W_CDF_BSL}) with $w=a+ \frac{h}{b}$ in (\ref{TPT}) to get
\begin{align}\label{int_EP_BSL}
f_{r^*} = \int_{0}^{\infty} 
\left[ 1 - \left[ 1-e^{-\left(a+\frac{h}{b}\right)} \right]^{N-1} \right].
e^{-h} dh. 
\end{align} 
To solve this integration, we use the binomial theorem
\begin{equation} \label{binomial_EP_BSL}
\left[ 1-e^{-\left(a+\frac{h}{b}\right)} \right]^{N-1} = 
\displaystyle \sum_{k=0}^{N-1} {N-1 \choose k} (-1)^{k} e^{-k\left(a+\frac{h}{b}\right)}. 
\end{equation} 
We substitute by (\ref{binomial_EP_BSL}) in (\ref{int_EP_BSL}). Then, the integral solution renders $f_{r^*}$ as in (\ref{fkpDpBSL}).

At the SUs side, we depend on (\ref{outage}) and (\ref{R1}) to write
\begin{align}\label{fsksDs_BSL}
f_{s^*}= \mathbb{P}\left[ \mathbf{h}_{s^*} > a \right]
=1-\beta^{N}
\end{align}
which follows directly from (\ref{Y_BSL}). This verifies $f_{s^*}$ in (\ref{fksDsBSL}).

\section{Derivation of $f_{r^*}$ and $f_{s^*}$ for EP-BPL} \label{EP-BPL}
We use the description of power allocation and node selection policies presented in Sections \ref{UP} and \ref{BPL}, respectively. Using (\ref{outage}) and (\ref{R2}), $f_{r^*}$ is given by (\ref{f}) which is the same as (\ref{TPT}) through total probability theory. 
The distributions of $\mathbf{h}_{r^*}$ and $\mathbf{h}_{\mathrm{I}}$ given by (\ref{W_BPL}) and (\ref{X_BPL}), respectively, are used to solve the integral in (\ref{TPT}) using similar steps to that presented in Appendix \ref{EP-BSL}. This renders $f_{r^*}$ as given in (\ref{fskpDp_FPBPL}).

An SU transmits on the best link to $D_{s}$ only when $Q_{r}$ is empty. Therefore,
\begin{align}\label{2}
f_{s^*}&=\mathbb{P}\left[\left. \bar{\mathcal{O}}_{s^*} \right|\mathsf{B}\right]
\mathbb{P}\left[\mathsf{B}\right]
+\mathbb{P}\left[\left. \bar{\mathcal{O}}_{s^*} \right|\bar{\mathsf{B}}\right]
\mathbb{P}\left[\bar{\mathsf{B}}\right]
\end{align}
where $\mathcal{O}_{s^*}$ denotes the outage event on the secondary link, and $\mathsf{B}$ denotes the event that $Q_{r}$ is non-empty. A bar over an event's symbol denotes its complement. Little's theorem \cite{Bertsekas} implies that
\begin{equation}\label{busy}
\mathbb{P}\left[\mathsf{B}\right]=\gamma
\end{equation}
where $\gamma$ is given by (\ref{gamma}). In (\ref{busy}), we use the arrival and service rates of $Q_{r}$ presented on both sides of
(\ref{Qr}), respectively. Next, we compute the probability of packet success on the secondary link when $Q_{r}$ is busy. From (\ref{outage}) and (\ref{R1}), we have
\begin{equation}\label{f_Qr_busy}
\mathbb{P}\left[\left. \bar{\mathcal{O}}_{s^*} \right| \mathsf{B}\right]=
\mathbb{P}\left[\left. \mathbf{h}_{s^*} > a \right| \mathsf{B}\right]=1-\beta^{N-1}.
\end{equation}
This follows from the distribution of $\mathbf{h}_{s^*}$ given by (\ref{Y_BPL}). On the other hand, if $Q_{r}$ is empty, $s^*$ transmits on the best link among $\mathbb{S} \rightarrow D_s$,
i.e., $\mathbf{h}_{s^*} = \underset{i \in \{1,\hdots,N\}}{\mathrm{max.}}~h_{s_{i}}$. Thus, we have
\begin{equation}\label{f_Qr_empty}
\mathbb{P}\left[\left. \bar{\mathcal{O}}_{s^*}\right|\bar{\mathsf{B}}\right]=
\mathbb{P}\left[\left. \mathbf{h}_{s^*} > a \right| \bar{\mathsf{B}}\right]=1-\beta^N.
\end{equation}
We substitute by the results of (\ref{busy}), (\ref{f_Qr_busy}), and (\ref{f_Qr_empty}) in (\ref{2}). This verifies that $f_{s^*}$ is given by (\ref{fksDs_UPBPL}).

\section{Derivation of $f_{r^*}$ and $f_{s^*}$ for AP-BSL} \label{AP-BSL}
Using total probability theory, we write
\begin{align}\label{13}
f_{r^*}= \mathbb{P}\left[ \left. \bar{\mathcal{O}}_{r^*} \right|  \mathcal{O}_{s^*} \right]
\mathbb{P}\left[ \mathcal{O}_{s^*}\right] 
+ \mathbb{P}\left[ \left. \bar{\mathcal{O}}_{r^*} \right|  \bar{\mathcal{O}}_{s^*}\right]
\mathbb{P}\left[ \bar{\mathcal{O}}_{s^*}\right]
\end{align}
where $\mathcal{O}_{r^*}$ denotes the outage event on the relay link. 
In (\ref{13}), we take into account the fact that $s^*$ remains silent if $P_{\mathrm{max}}$ is not sufficient to satisfy (\ref{P1}). Therefore, we
compute the probability of a successful transmission on the relay link in both cases of $s^*$ activity, i.e., either active or silent. Thus, from (\ref{P1}), we have
\begin{align}\label{14}
\mathbb{P}\left[ \mathcal{O}_{s^*} \right]=
\mathbb{P}[\mathbf{h}_{s^*} < a ]=\beta^N.
\end{align}
This can directly be verified using the distribution of $\mathbf{h}_{s^*}$ presented in (\ref{Y_BSL}). 
In the event of a sure outage on the secondary link, $s^*$ refrains from transmission. 
We then plug $P_{s^*}=0$ into (\ref{P2}) and write
\begin{align}\label{15}
\mathbb{P}\left[ \left. \bar{\mathcal{O}}_{r^*} \right| \mathcal{O}_{s^*} \right]=
\mathbf{P}[\mathbf{h}_{r^*} > a ]=1-\beta^N.
\end{align}
This result is explained as follows. When $s^*$ is silent, $r^*$ is selected to be the SU with the best link to $D_{p}$ to enhance the PU throughput. Thus, in this specific case, $\mathbf{h}_{r^*}$ is
the maximum of $N$ exponential random variables with means 1 each. This renders $\mathbb{P}[\mathbf{h}_{r^*} > a]=1-\beta^N$.
 
On the other hand, when $s^*$ is active, i.e., $\mathbf{h}_{s^*} \geq a$, we choose $P_{s^*}$ to be the value that meets (\ref{P1}) with equality and plug it into (\ref{P2}). After some algebraic manipulation, we write
\begin{align}\label{16}
\mathbb{P}\left[ \left. \bar{\mathcal{O}}_{r^*} \right| \bar{\mathcal{O}}_{s^*}\right] \!=\!
\mathbb{P}\left[ \mathbf{h}_{\mathrm{I}} \leq \left. b\left( \frac{\mathbf{h}_{r^*}}{a}-1 \right)\mathbf{h}_{s^*} \right| \mathbf{h}_{s^*} \geq a \right]\!.
\end{align}
The first step towards solving (\ref{16}) requires the computation of $\mathbb{P}[\left. \mathbf{h}_{\mathrm{I}} \leq z\mathbf{h}_{s^*} \right| \mathbf{h}_{s^*} \geq a]$ for an arbitrary $z \geq 0$. Proceeding with that, we have
\begin{align}\label{102}
\mathbb{P}[\left. \mathbf{h}_{\mathrm{I}} \leq z\mathbf{h}_{s^*} \right| \mathbf{h}_{s^*} \geq a]=
\frac{\mathbb{P}[\mathbf{h}_{\mathrm{I}} \leq z \mathbf{h}_{s^*},\mathbf{h}_{s^*} \geq a]}
{\mathbb{P}[\mathbf{h}_{s^*} \geq a]}.
\end{align}
The numerator of (\ref{102}) can be computed as follows.
\begin{equation}\label{17}
\mathbb{P}[\mathbf{h}_{\mathrm{I}} \leq z \mathbf{h}_{s^*},\mathbf{h}_{s^*} \geq a]\!=\!\!
\int_{a}^{\infty} \!\!\!\! \int_{0}^{zy} \mathcal{P}_{\mathbf{h}_{\mathrm{I}}} (x)
\mathcal{P}_{\mathbf{h}_{s^*}} (y) dx dy
\end{equation}
The distributions of $\mathbf{h}_{\mathrm{I}}$ and $\mathbf{h}_{s^*}$ are given by (\ref{X_BSL}) and (\ref{Y_BSL}), respectively, and we use the fact that $\mathbf{h}_{\mathrm{I}}$ and $\mathbf{h}_{s^*}$ are independent. This information, along with the binomial theorem, is used to solve the double integral in (\ref{17}). Thus,
\begin{align}\label{18}
\mathbb{P}[\mathbf{h}_{\mathrm{I}} \leq z \mathbf{h}_{s^*},\mathbf{h}_{s^*} \geq a]=
N & \displaystyle \sum_{k=0}^{N-1} {N-1 \choose k} (-1)^{k} e^{-a(k+1)} \notag \\ 
&
\times
\left[ \frac{1}{k+1} - \frac{e^{-az}}{z+k+1} \right].
\end{align}
Furthermore, we know from (\ref{14}) that
\begin{equation}\label{101}
\mathbb{P}\left[ \bar{\mathcal{O}}_{s^*}\right]=
\mathbb{P}[\mathbf{h}_{s^*} \geq a]=1-\beta^N.
\end{equation}
Then, we substitute by (\ref{18}) and (\ref{101}) in (\ref{102}). Next,
we use total probability theory to write (\ref{16}) as
\begin{align}\label{19}
\mathbb{P}\!\left[\! \left. \bar{\mathcal{O}}_{r^*}\! \right| \bar{\mathcal{O}}_{s^*}\!\right]\!\!=\!\!\! 
\int_{a}^{\infty}\!\!\!\!\!
\mathbb{P}\!\left[ \mathbf{h}_{\mathrm{I}}  \leq\! \left. b\left( \frac{w}{a}-1 \right) \mathbf{h}_{s^*}
\right| \mathbf{h}_{s^*} \geq a \right]\!\mathcal{P}_{\mathbf{h}_{r^*}}\!(w)dw
\end{align}
where $\mathcal{P}_{\mathbf{h}_{r^*}}(.)$ is given by (\ref{W_BSL}). We then substitute by the result of (\ref{102}), with $z=b\left( \frac{w}{a}-1 \right)$, in (\ref{19}). The solution of the integral yields 
\begin{align}\label{20}
\mathbb{P}\!\left[\! \left. \bar{\mathcal{O}}_{r^*}\! \right|\! \bar{\mathcal{O}}_{s^*}\!\right]\!\!=\!\! 
\frac{N}{\left(1-\beta^N\right)}
\!\!\displaystyle \sum_{k=0}^{N-1}\!\! {\!N\!-\!1\! \choose\! k\!} (\!-1\!)^{k} e^{-a(k+1)} 
\!\left[ \mathrm{I}_{3}\! -\! \mathrm{I}_{4} \right]
\end{align}
where $\mathrm{I}_{3}$ and $\mathrm{I}_{4}$ are given by (\ref{I}) and (\ref{II}), respectively. The derivation of (\ref{II}) depends on the fact that
\begin{align}\label{integration}
\int_{a}^{\infty} \frac{e^{-tw}}{w+c}dw=e^{tc}E_{1}[t(a+c)]
\end{align}
for any constants $t$ and $c$. Substituting by (\ref{I}) and (\ref{II}) in (\ref{20}), and using (\ref{14}), (\ref{15}), (\ref{101}), and (\ref{20}) in (\ref{13}), $f_{r^*}$ is shown to be given by (\ref{fkpDpAPBSL}).

For the SUs, $f_{s^*}$ is shown to be given by (\ref{fsksDs_BSL}) following the same proof provided for the case of EP-BSL in Appendix \ref{EP-BSL}.

\section{Derivation of $f_{r^*}$ and $f_{s^*}$ for AP-BPL} \label{AP-BPL}
The derivation of $f_{r^*}$ for AP-BPL follows the same footsteps of the derivation presented in Appendix \ref{AP-BSL}. However, the difference in the node selection policies induces different distributions for the random variables of interest. We can write $f_{r^*}$ as in (\ref{13}). First, we derive the first term in the RHS of (\ref{13}) as follows.
\begin{align}\label{201}
\mathbb{P}\left[ \mathcal{O}_{s^*}\right]=
\mathbb{P}[\mathbf{h}_{s^*} < a]=\beta^{N-1}.
\end{align}
This follows from the distribution of $\mathbf{h}_{s^*}$ presented in (\ref{Y_BPL}). 
When $s^*$ is silent, we plug $P_{s^*}=0$ into (\ref{P2}) and write
\begin{align}\label{2000}
\mathbb{P}\left[\! \left. \bar{\mathcal{O}}_{r^*}\right| \mathcal{O}_{s^*}\right]=
\mathbb{P}[\mathbf{h}_{r^*}>a]=1-\beta^N
\end{align}
where the distribution of $\mathbf{h}_{r^*}$ is given by (\ref{W_BPL}).
Then, we shift our attention to the second term in the RHS of (\ref{13}).
When $s^*$ is active, i.e., $\mathbf{h}_{s^*} \geq a$, we choose $P_{s^*}$ to be the value that meets (\ref{P1}) with equality and plug it into (\ref{P2}).
Then, we compute the probability of success on the relay link given that $s^*$ is active as in (\ref{16}). 
We solve (\ref{17}) using the distributions of $\mathbf{h}_{s^*}$ and $\mathbf{h}_{\mathrm{I}}$ in (\ref{Y_BPL}) and (\ref{X_BPL}), respectively, along with the fact that they are independent to get
\begin{align}\label{202}
&\mathbb{P}\left[\mathbf{h}_{\mathrm{I}}\! \leq z \mathbf{h}_{s^*}\!,\mathbf{h}_{s^*}\! \geq a\right]\!\! = \!\!
\displaystyle \sum_{k=1}^{N-1} \sum_{\ell=0}^{k-1} \sum_{m=0}^{N-2} 
\!\!{\!N\! - \! 1 \!\choose k\! - \!1} {\!k\! - \! 1\! \choose \!\ell\!} {\!N\! - \! 2\! \choose\! m\!} \notag \\  
& \times \!\!\!
\frac{N(-1)^{m+\ell}}{(\! N \! - \! k \! + \! \ell \! + \! 1 \!)} \!\!
\left[\! 
\frac{e^{-a(m+1)}}{(m+1)}\! - \!
\frac{e^{-a\left(m+z(N-k+\ell+1)+1\right)}}
{\left( m \! + \!z  (  N \! - \! k \! +\! \ell \! + \!1 ) \! + \! 1\! \right)}\!
\right]  
\end{align}
for $z \geq 0$. Next, we substitute by the result of (\ref{202}), with $z=b\left(\frac{w}{a}-1\right)$, in (\ref{19}) and solve the integral. After some algebraic manipulation, omitted for brevity, the second term in the right hand side of (\ref{13}) is found to be equal to
\begin{align}\label{205}
\displaystyle \sum_{k=1}^{N-1} \sum_{\ell=0}^{k-1} \sum_{m=0}^{N-2}\!\!\!  
{\!N\!-\!1\! \choose\! k\!-\!1\!}\! {\!k\!-\!1\! \choose\! \ell\!} \!{\!N\!-\!2\! \choose\! m\!} 
\frac{(-\!1)^{m+\ell}N^2}{(N\!-\!k\!+\!\ell\!+\!1)} \left[ \mathrm{I}_{5}\! -\! \mathrm{I}_{6} \right]
\end{align}
where
\begin{align}
& \mathrm{I}_{5}=
\frac{e^{-a(m+1)}}{(m+1)} \int_{a}^{\infty} e^{-w}(1-e^{-w})^{N-1}dw  \label{I5_int}\\
& 
\mathrm{I}_{6}=\displaystyle \sum_{n=0}^{N-1} 
\frac{ a (-1)^{n} e^{-a(m+n+2-t)}}{(t-n-1)}
\int_{a}^{\infty} \frac{e^{-tw}}{w+c} dw \label{I6_int}
\end{align}
and the terms $t$ and $c$ are given by (\ref{t}) and (\ref{c}), respectively. The solution of the integral in (\ref{I5_int}) proves that $\mathrm{I}_{5}$ is given by (\ref{I5}). We use (\ref{integration}) to show that $\mathrm{I}_{6}$ is given by (\ref{I6}). Then, (\ref{201}), (\ref{2000}), and (\ref{205}) shows that $f_{r^*}$ is given by (\ref{fr_APBPL}).  

On the other hand, $f_{s^*}$ is shown to be given by (\ref{fksDs_UPBPL}) following the same proof provided for the case of EP-BPL in Appendix \ref{EP-BPL}.

\renewcommand{\baselinestretch}{1.5}

\bibliographystyle{IEEEtran}
\bibliography{IEEEabrv,bibliography}
\end{document}